\documentclass[conference,letterpaper]{IEEEtran}

\usepackage[utf8]{inputenc} 
\usepackage[T1]{fontenc}
\usepackage{url}
\usepackage{ifthen}
\usepackage{cite}
\usepackage[cmex10]{amsmath} 
\usepackage{tikz}
\usetikzlibrary{spy,backgrounds}
\usepackage{pgfplotstable}

\usepackage{pgfplots}

\usetikzlibrary{arrows.meta,decorations.pathmorphing,positioning,backgrounds,fit,calc,shapes.geometric}
\usepackage{pgfplots}
\usetikzlibrary{shadows} 
\usetikzlibrary{patterns}
\pgfplotsset{compat=newest}

\usepackage{amsmath, amssymb,scalefnt,setspace}
\usepackage{mathrsfs}
\usepackage{amsfonts}
\usepackage{accents}
\usepackage{mathtools}

\usepackage{comment}

\usepackage{enumitem}

\newlist{steps}{enumerate}{1}
\setlist[steps, 1]{wide=0pt, leftmargin=\parindent, label=Step \arabic*:, font=\bfseries}

\usepackage[IEEEtran,notheorems,hyperref]{research19}

\usepackage{xargs}
\usepackage{url}
\usepackage[caption=false,font=footnotesize]{subfig}
\usepackage{stfloats}

\definecolor{myGreen}{rgb}{0,0.7,0.2}

 \newtheorem{theorem}{Theorem}

    \newtheorem{definition}{Definition}
   
   \newtheorem{proposition}{Proposition}
    \newtheorem{example}{Example}
    \newtheorem{conditions}{Conditions}

\newcommandx{\strong}[3][2=\epsilon,3=n]{\mathcal{A}^{* (#3)}_{#2} (#1)}
\newcommandx{\weak}[3][2=\epsilon,3=n]{\mathcal{A}^{ (#3)}_{#2} (#1)}
\newcommandx{\typen}[2][2=n]{\top ^{ (#2)}_{#1}}
\newcommandx{\contypen}[3][3=n]{\top ^{ (#3)}_{#1}(#2)}

\newcommandx{\alltypen}[2][2=n]{\set P_{#2}({#1})}
\newcommandx{\allcontypen}[2][2=n]{\set P_{#2}({#1})}
\newcommandx{\typeseqn}[2][2=n]{\top ^{ (#2)}({#1})}
\newcommandx{\allprob}[1]{\set P({#1})}

\let\vec\relax
\newcommand{\vec}{\mathbf}
\let\set\relax
\newcommand{\set}{\mathcal}

\newcommand{\alphabetX}{\set X}
\newcommand{\alphabetY}{\set Y}

\newcommand{\alphabetN}{\set N}
\newcommand{\alphabetZ}{\set Z}

\newcommand{\alphabetU}{\set U}
\newcommand{\alphabetV}{\set V}
\newcommand{\alphabetT}{\set T}

\newcommand{\tgamma}{\tilde{\gamma}}
\newcommand{\sng}{\sqrt{n}\gamma}

\newcommand{\bfyhat}{{\mathbf{\hat  y}}}
\newcommandx{\repbfyhat}[1][1=P]{\bfyhat_{#1}}
\newcommandx{\optrepbfyhat}[1][1=P]{\bfyhat^*_{#1}}

\newcommand{\pdone}[1]{\frac{\partial}{\partial{#1}}}
\newcommand{\pdtwo}[1]{\frac{\partial^2}{\partial{#1^2}}}
\newcommand{\pdcross}[2]{\frac{\partial^2}{\partial{#1}\partial{#2}}}

\newcommand{\olsi}[1]{\,\overline{\!{#1}}}
  
 \newcommand*{\QEDB}{\null\nobreak\hfill\ensuremath{\square}}%

\newcommand{\ben}{\begin{enumerate}}
\newcommand{\een}{\end{enumerate}}
\newcommand{\bi}{\begin{itemize}}
\newcommand{\ei}{\end{itemize}}

\newcommand{\limn}{\lim_{n\rightarrow\infty}}

\newcommand{\one}{\frac{1}{n}}
\newcommand{\sone}{\frac{1}{\sqrt{n}}}
\newcommand{\half}{\frac{1}{2}}

\newcommand{\onei}{{\rm 1\!\!\!\:I}}

\newcommand{\dist}{\mathsf D}

\newcommand{\timesn}[1]{ {#1}^{\times\!n}}
\newcommand{\qzeron}{Q_0^{\times\!n}}

\newcommandx{\admchannel}[1][1=\dist]{\set W^{\leq #1} }
\newcommandx{\admchanneln}[1][1=n]{\set W^{\leq \dist}_{#1} }
\newcommandx{\discontypen}[2][2=\dist]{\set W^{\leq #2}_n ({#1})}

\newcommandx{\admchannelpermn}[1][1=n]{\bar {\set W}^{\leq \dist}_{#1} }

\newcommand{\changed}[1]{{\color{blue}#1}}

\begin{document}
\title{Covert Capacity of Degraded Broadcast Channels} 


\author{%
  \IEEEauthorblockN{Yossef Steinberg}
  \IEEEauthorblockA{ 
Technion--Israel Institute of Technology\\
Haifa, Israel\\
ysteinbe@technion.ac.il}
  \and
  \IEEEauthorblockN{Mich\`ele Wigger}
  \IEEEauthorblockA{Universit\'e Paris-Saclay, CNRS, CentraleSup\'elec, L2S\\ 
                    91190 Gif-sur-Yvette, France\\
                   michele.wigger@centralesupelec.fr}
}

\maketitle

\begin{abstract}
We derive the capacity region of the degraded broadcast channel (DBC) 
subject to the constraint that the communication is not detected by an adversary, the Warden. Our capacity result is in a computable form and numerical results show that time-sharing is  suboptimal in general, and improved rates can be obtained through superposition coding. 
\end{abstract}

\begin{IEEEkeywords}
Broadcast channels, covert communication, degraded broadcast channels.
\end{IEEEkeywords}

\section{Introduction}\label{sec:intro}
Communication subject to information-theoretic security constraints has a long history in information theory; see, e.g., the seminal work by Wyner \cite{Wyner:75p}. More recently, significant attention has been given to communication systems that are subject to a covertness constraint, i.e., systems in which an external adversary, a so-called Warden, is not allowed to learn even the mere fact that communication is taking place. This requirement is typically enforced by imposing that the Kullback–Leibler divergence (or other measures such as the variational distance) between the warden's actual channel output distribution and the hypothetical output distribution assuming that the transmitter always sends a specific zero symbol $x_0$ remains below a given threshold $\delta > 0$.

The work \cite{bash_first} first showed that, under the above covertness assumption, reliable communication over memoryless Gaussian channels is possible, provided that the number of communicated information bits scales proportionally to the square root of the number of channel uses. Covert rates are therefore commonly defined as $L:= \frac{1}{\sqrt{n\delta}} \log_2 \nu$,  for $n$ the blocklength, $\delta$ the covertness constraint, and $\nu$ the size of the message set. 
Covert capacities for discrete memoryless channels (DMCs) and Gaussian memoryless channels were determined in \cite{bash_first, bash_p2p, Bloch:16p, WangWornellZheng:16p}, and were shown not to depend on the parameter $\delta$. Moreover, \cite{Bloch:16p} also determined the rate of the secret key shared between encoders and decoders required to achieve this common capacity. This result was extended to all covert rates (not only capacity) in \cite{key}.

Covert capacity regions for discrete memoryless multi-access interference channels  were determined in  \cite{bloch_k_users_mac} and \cite{Tin}, and the required key rates at all covert rates in \cite{key}. 
Broadcast channels (BC) under a covertness constraint were studied in  \cite{bloch_journal_embedding_broadcast, ligong_broadcast, TanLee:19p}. The works in   \cite{bloch_journal_embedding_broadcast, ligong_broadcast} considered a mixed covert/non-covert scenario  with a non-covert communication from the transmitter to both receivers and a covert communication from  the transmitter to only one of the receivers, which needs to remain undetectable (covert) to the other receiver. In contrast, the work in \cite{TanLee:19p} and this present article both  consider a scenario where the transmitter sends individual messages to two broadcast receivers, and the entire communication needs to remain undetectable to an external warden. 

The time-sharing region in this setup can be written as \cite{TanLee:19p}: 
\begin{equation}\label{eq:TS}
\mathcal{L}^{(\textnormal{TS})}=\left \{(L_1,L_2)\colon 
\frac{L_1}{L_1^*}+  \frac{L_2}{L_2^*} \leq 1\right\}
\end{equation} for $L_1^*$ and $L_2^*$ the   covert capacities  of the BC marginal channel transition laws $P_{Y_1|X}$ and $P_{Y_2|X}$. 
 The work in \cite{TanLee:19p} proved that time-sharing is optimal for all   BCs with  marginal capacities satisfying  $L_1^* \geq L_2^*$and \begin{equation}
\label{eq:Time_sharing_c1}
\frac{L_1^*}{L_2^*}\geq \sup_{P_X}\frac{I(X;Y_1)}{I(X;Y_2)}.
\end{equation}
The same work also proved optimality of time-sharing for all Gaussian BCs and binary symmetric BCs. 

In this work, we show that optimality of time-sharing does not hold for general (stochastically or physically) degraded channels, and superposition coding can strictly improve over the time-sharing region. In fact, we present a computable characterization of the superposition coding region and show that it achieves capacity for all degraded BCs.

\section{Definitions}
\label{sec:def}
Let $\alphabetX$, $\alphabetY_1$, $\alphabetY_2$, $\alphabetZ$ be finite sets. 
Denote by ${\cal P}(\alphabetX)$ the class of all distributions on $\alphabetX$.
A discrete memoryless broadcast channel (BC) with two users and a warden is a quintuple
$\{\alphabetX,P_{Y_1,Y_2,Z|X},\alphabetY_1,\alphabetY_2,\alphabetZ\}$
where $\alphabetX$ is the input alphabet, $\alphabetY_k$ is the output alphabet of user $k$, for $k=1,2$, 
$\alphabetZ$ is the output alphabet at the warden and $P_{Y_1,Y_2,Z|X}$ is a transition probability matrix
from $\alphabetX$ to $\alphabetY_1\times\alphabetY_2\times\alphabetZ$. 
We denote by $P_{Z|X}$ and $P_{Y_k|X}$, for $k=1,2$, the conditional marginals of $P_{Y_1,Y_2,Z|X}$,
and by $P_X\circ P_{Z|X}$ the distribution on $\alphabetZ$ induced by $P_X$ at the input. 
When $P_X$ is understood from the context, we will use the notation $P_Z$, and similarly
for $P_X\circ P_{Y_k|X}$, $P_{Y_k}$ etc.
Let $x_0\in\alphabetX$ stand for the zero symbol, 
namely, the symbol fed into the channel when no communication is taking place.
Define
\begin{equation}
Q_0(z) = P_{Z|X}(z|x_0),\quad z\in\alphabetZ,
\label{eq:def_Qz}
\end{equation}
and let $\qzeron$ stand for  the $n$-fold product of $Q_0$, i.e.,
\begin{equation}
\qzeron(z^n) = \prod_{i=1}^n Q_0(z_i). \label{eq:qzeron}
\end{equation}
The goal in covert communication is to transmit information to the receivers
while keeping the distribution at the warden output close to $\qzeron$.
Fix integers $\nu_k$, $k=1,2$ and transmission length $n$. 
Let $\alphabetN_k=[1:\nu_k]$ stand for the set of messages of user $k$.
The transmitter and legitimate receivers share a random secret key $S$, taking values in a finite set ${\set S}$.
We assume that the key is of sufficiently large randomness, thus do not specify the size of ${\set S}$.
  \begin{definition}
 \label{def:code}
 An $(n,\nu_1,\nu_2,\epsilon,\delta)$ covert code for the BC
 $\{\alphabetX,P_{Y_1,Y_2,Z|X},\alphabetY_1,\alphabetY_2,\alphabetZ\}$
 with a warden consists of an encoder
 \begin{equation}
 f: \alphabetN_1\times\alphabetN_2\times{\set S} \rightarrow \alphabetX^n, \label{eq:def_enc}
 \end{equation}
 and pair of decoders
 \begin{equation}
 \phi_k: \alphabetY^n_k\times {\set S}\rightarrow \alphabetN_k,\qquad k=1,2,\label{eq:def_dec}
 \end{equation}
 such that the probabilities of error are bounded by $\epsilon$:
 \begin{IEEEeqnarray}{rCl}
 \lefteqn{
 P_{e,k} =} \nonumber\\
 &&\frac{1}{\nu_1\nu_2}\!\sum_{\substack{(m_1,m_2) \\\in \alphabetN_1\times\alphabetN_2}}\sum_{s\in{\set S}} P_S(s)
  P_{Y_k^n|X^n}\left(D_{m_k}^c(s) | f(m_1,m_2,s)\right)\leq\epsilon,\nonumber\\
 && \hspace{6.5cm}\quad k=1,2,
 \label{eq:def_pe}
 \end{IEEEeqnarray}
 and the output distribution at the warden approximates $\qzeron$ in the divergence sense:
 \begin{equation}
 D\left( P_{Z^n}|| \qzeron\right) \leq \delta.
 \label{eq:def_divergence}
 \end{equation}
 The set $D_{m_k}(s)$ in~(\ref{eq:def_pe}) is the decoding region of message $m_k$
 \begin{equation}
 D_{m_k}(s) = \left\{y_k^n:\quad \phi_k(y_k^n,s) = m_k\right\}, \qquad k=1,2,
 \label{def:dec_region}
 \end{equation}
 and $P_{Z^n}$ in~(\ref{eq:def_divergence}) stands for the distribution of the warden output $Z^n$ induced by  the operation of  the code:
 \begin{IEEEeqnarray}{rCl}
P_{Z^n}(z^n) =
 &&  \frac{1}{\nu_1\nu_2}\sum_{\substack{(m_1,m_2) \\\in \alphabetN_1\times\alphabetN_2}}\sum_{s\in{\set S}} P_S(s)
  P_{Z^n|X^n}(z^n|f(m_1,m_2,s)). \nonumber\\\label{eq:def_pzn}
 \end{IEEEeqnarray}
 \end{definition}

In this work, we focus on BCs  that satisfy the following conditions,
that are now standard in covert communications \cite{WangWornellZheng:16p,Bloch:16p,TanLee:19p}.
\begin{conditions}[Non-redundancy and absolute continuity]
\label{cond:Q0}
\begin{enumerate}[label=\alph*)]
\item The zero symbol is not redundant at the warden output. I.e., $Q_0\not\in\mbox{CH}[P_{Z|X}(\cdot|x'),\ x'\in\alphabetX\setminus\{x_0\}]$,     
         where CH stands for the convex hull.
\item Absolute continuity w.r.t. $x_0$ symbol at the warden: $P_{Z|X}(\cdot|x)\ll Q_0\quad \forall x\in\alphabetX$.
\item Absolute continuity w.r.t. $x_0$ symbol at the users output: 
         $P_{Y_k|X}(\cdot|x)\ll P_{Y_k|X}(\cdot|x_0)\quad \forall x\in\alphabetX,\ \ k=1,2$.
\end{enumerate}
\end{conditions}
Part a) of Conditions~\ref{cond:Q0} guarantees that the encoder cannot mimic the no communication state
with some input distribution $P_X$, that results with output distribution at the warden being equal to $Q_0$. 
If part b)  is not satisfied, than there is an input symbol $x'$ that the encoder cannot use, effectively 
reducing the input alphabet size. For the single user channel, it is shown in~\cite{Bloch:16p} that if
c) does not hold, the number of covert bits that can be transmitted
grows like $\sqrt{n}\log n$. Thus, Conditions~\ref{cond:Q0} are the most pessimistic assumptions
that still allow covert communications. For details, see~\cite[Appendix G]{Bloch:16p} and~\cite{TanLee:19p}.

The covert rates of the code are defined as
\begin{equation}
L_k=\frac{\log \nu_k}{\sqrt{n\delta}},\quad k=1,2
\label{eq:covert_rates}
\end{equation}
A pair of covert rates $(L_1,L_2)$ is called $\delta$-achievable if for any $\epsilon>0$, $\rho>0$ and sufficiently large $n$
there exists an $(n,2^{\sqrt{n\delta}(L_1-\rho)},2^{\sqrt{n\delta}(L_2-\rho)},\epsilon,\delta)$ covert code for the BC $P_{Y_1,Y_2,Z|X}$.
The collection of all  $\delta$-achievable pairs is called the covert capacity region, and is denoted by ${\set L}_\delta^*$. As we will see, it does not depend on the value of $\delta>0$. 

In this work we derive the covert capacity  region for stochastically degraded broadcast channels, where the degradation is between
the legitimate users, i.e., we assume  that there exists a conditional distribution $P_{Y_2|Y_1}$ such  that
\begin{equation}
P_{Y_2|X}(y_2|x) = \sum_{y_1} P_{Y_1|X}(y_1|x)P_{Y_2|Y_1}(y_2|y_1).\label{eq:deg}
\end{equation}
By the problem definition,
${\set L}_\delta^*$ depends on~$P_{Y_1,Y_2,Z|X}$ only via its conditional marginals. 
Hence in the sequel a BC with a warden is referred to as 
three channels with common input $\{P_{Y_1|X},P_{Y_1|X},P_{Z|X}\}$, where the alphabets are understood from the context. 
In addition, no distinction has to be made between
stochastically and physically degraded models, and they are  commonly referred to as degraded channels.

\section{Main results}
\label{sec:main_results}
Let ${\set L}_{n,\delta}^{(I)}$ stand for the collection of  nonnegative pairs $(L_1,L_2)$ satisfying
\begin{subequations}
\label{eq:cap_limit}
\begin{IEEEeqnarray}{rCl}
L_1 &\leq& \sqrt{n/\delta} I(X;Y_1|U)\label{eq:L1_limit}\\
L_2 &\leq& \sqrt{n/\delta} I(U;Y_2)\label{eq:L2_limit}
\end{IEEEeqnarray}
where mutual informations are calculated according to $P_{U,X} P_{Y_1Y_2|X}$ for some $P_{U,X}$ so that the induced $P_Z= P_X \circ P_{Z|X}$ satisfies 
\begin{equation}
D(P_Z|| Q_0) \leq \frac{\delta}{n}. \label{eq:div_constraint_limit}
\end{equation}
Define
\begin{equation}
{\set L}_\delta^{(I)} = \bigcap_{n\geq 1} {\set L}_{n,\delta}^{(I)}. \label{eq:cap_intersection}
\end{equation}
\end{subequations}
Before proceeding to our  main result, we state a few properties
of  the region~${\set L}_\delta^{(I)}$. 
To exhaust ${\set L}_\delta^{(I)}$  it is enough to restrict the alphabet $\alphabetU$ to satisfy
\begin{equation}
\left|\alphabetU\right|\leq \left|\alphabetX\right|+1.
\label{eq:U_size}
\end{equation}
The bound~(\ref{eq:U_size}) is proved using the support lemma~\cite{CsiszarKorner:82b}
for every $n$.
Note that the presence of the additional constraint~(\ref{eq:div_constraint_limit}) does not increase the alphabet
size of $U$, because when applying the support lemma to restrict $|\alphabetU|$,
the distribution of $X$ is preserved, hence also~(\ref{eq:div_constraint_limit}).
The details are omitted.

Let $\check{P}_{U,X}^{(n)}$ be a sequence
of distributions that achieves a rate pair $(l_1,l_2)\in{\set L}_\delta^{(I)}$. Since the alphabets
are finite, ${\set P}(\alphabetU\times\alphabetX)$ is compact, hence $\check{P}_{U,X}^{(n)}$
converges to a limit distribution on a subsequence $n_k,\ k=1,2...$, with $n_k<n_{k+1}$. Define a sequence
of distributions $P^{(n)}_{U,X}$ as follows:
\begin{subequations}
\begin{IEEEeqnarray}{rCl}
P^{(n_k)}_{U,X} &=& \check{P}_{U,X}^{(n_k)}\quad k=1,2...\label{subeqeq:def_P_converge_1}\\
P^{(n)}_{U,X}    &=& \check{P}_{U,X}^{(n_k)} \quad n_{k-1}<n\leq n_k. \label{subeqeq:def_P_converge_2}
\end{IEEEeqnarray}
\end{subequations}
Then $P_{U,X}^{(n)}$ converges, and achieves the same rate pair $(l_1,l_2) \in {\set L}_\delta^{(I)}$.
To simplify notation, from this point on we drop the superscript $(n)$ and use just $P_{U,X}$ with the understanding
that the distributions depend on $n$ and converge.
    
\begin{theorem}
\label{theo:main1}
For any discrete memoryless degraded BC with a warden, 
the following holds:
\begin{enumerate}
\item
${\set L}_\delta^* ={\set L}_\delta^{(I)}$.
\item
A sequence of distributions $P_{U,X}$ achieves a  positive $L_2$ according to constraint \eqref{eq:L2_limit} 
only if there exists a subset $B\subset\alphabetU$ such that
\begin{subequations}
\label{subeq:decay_U_theo}
\begin{IEEEeqnarray}{rCl}
P_U(B) &>& 0\label{subeq:decay_U_pos_theo}\\
\limn P_U(B) &=& 0. \label{subeq_decay_U_lim_theo}
\end{IEEEeqnarray}
\end{subequations}
\end{enumerate}
\end{theorem}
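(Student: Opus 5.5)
The proof splits into three parts: the converse ${\set L}_\delta^*\subseteq{\set L}_\delta^{(I)}$, the achievability ${\set L}_\delta^{(I)}\subseteq{\set L}_\delta^*$ via superposition coding, and item 2, which is a structural consequence of the divergence constraint~(\ref{eq:div_constraint_limit}).

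\emph{Converse.} We fix an $(n,\nu_1,\nu_2,\epsilon,\delta)$ covert code.
\begin{itemize}
\item As in the standard degraded-BC converse, we treat $S$ as available to both receivers and apply Fano's inequality to each user.
\item We set $U_i=(M_2,S,Y_1^{i-1})$. Degradedness (physical degradedness may be assumed) gives $\log\nu_2\le\sum_i I(U_i;Y_{2,i})+n\epsilon_n$ and $\log\nu_1\le\sum_i I(X_i;Y_{1,i}|U_i)+n\epsilon_n$. The key $S$ is independent of $(M_1,M_2)$, so it enters the auxiliary harmlessly.
\item We introduce a uniform time-sharing variable $T$ and set $U=(U_T,T)$, $X=X_T$.
\item For the warden, $\delta\ge D(P_{Z^n}\|\qzeron)\ge\sum_i D(P_{Z_i}\|Q_0)\ge nD(P_Z\|Q_0)$. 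This holds by the chain rule plus the product structure of $\qzeron$, followed by convexity of divergence. Hence $P_X$ satisfies~(\ref{eq:div_constraint_limit}).
\item Dividing by $\sqrt{n\delta}$ gives $L_1\le\sqrt{n/\delta}\,I(X;Y_1|U)+o(1)$, and similarly for $L_2$.
\end{itemize}
The $\epsilon_n$ terms are the delicate point. Fano gives $1+\epsilon\log\nu_k$, and dividing by $\sqrt{n\delta}$ leaves a term of order $\epsilon L_k$. So we first obtain the region up to a factor $(1-\epsilon)$ and then let $\epsilon\to0$. We also need the mutual informations to be $O(1/\sqrt n)$, which follows from the divergence constraint and Condition~\ref{cond:Q0}(a). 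Because the converse must hold for every $n$ (the region is an intersection over $n$), we further use a scaling argument: covert rates are normalized by $\sqrt n$ and the constraint is $\delta/n$, so a code of length $n$ yields a feasible distribution at each scale.

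\emph{Achievability.} We use superposition coding with a secret key, following~\cite{Bloch:16p}:
\begin{itemize}
\item Generate a cloud-center codebook i.i.d.\ from $P_U$, with $\nu_2$ messages times key bins, and satellite codewords i.i.d.\ from $P_{X|U}$.
\item Decode with one-shot information-density (threshold) decoders: user 2 decodes $U^n$; user 1 decodes $U^n$ (which is possible by degradedness) and then $X^n$.
\end{itemize}
Reliability requires concentration of the information densities $\sum_i\imath(U_i;Y_{2,i})$ and $\sum_i\imath(X_i;Y_{1,i}|U_i)$. Their means are of order $\sqrt n$, and we need variances $o(n)$ relative to the squared means, which Condition~\ref{cond:Q0}(c) ensures. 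Covertness uses channel resolvability with a key: if the total codebook rate, including the key, exceeds $I(UX;Z)$ at the $\sqrt n$ scale, then $D(P_{Z^n}\|P_Z^{\times n})\to0$. Combining this with $nD(P_Z\|Q_0)\le\delta$ and the standard decomposition of $D(P_{Z^n}\|\qzeron)$ gives covertness; the slack $\rho$ absorbs the vanishing terms. I expect this second-order concentration to be the main obstacle: the sequence $P_{U,X}$ depends on $n$, and the moments must be controlled uniformly along it. The approach is Chebyshev bounds with variances expressed via $\chi^2$-type quantities of the perturbation $P_X-\delta_{x_0}$.

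\emph{Item 2.} This follows by contradiction from the structure forced by the constraint.
\begin{itemize}
\item Constraint~(\ref{eq:div_constraint_limit}) together with Condition~\ref{cond:Q0}(a) forces $P_X(x_0)\to1$, so $P_X\to\delta_{x_0}$.
\item Suppose no set $B$ as in~(\ref{subeq:decay_U_theo}) exists, i.e.\ $P_U(u)$ is bounded away from zero for every $u$ in the support. Then each conditional $P_{X|U=u}=P_{X,U}(\cdot,u)/P_U(u)$ must also tend to $\delta_{x_0}$.
\item Set $P_{X|U=u}=\delta_{x_0}+\mu_u$ with $\|\mu_u\|=O(n^{-1/2})$. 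A second-order expansion of $D(P_{Y_2|U=u}\|P_{Y_2})$ gives $I(U;Y_2)=O(\max_u\|\mu_u\|^2)=O(1/n)$, since the first-order terms cancel.
\item Hence $\sqrt n\,I(U;Y_2)\to0$, i.e.\ $L_2=0$.
\end{itemize}
Conversely, a positive $L_2$ requires some $u$ with vanishing but positive $P_U(u)$, which is exactly the set $B$.
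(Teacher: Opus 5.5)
Your plan follows the paper's proof in all three parts.

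\textbf{Converse.} It is the same Fano argument with $U_i=(M_2,S,Y_1^{i-1})$, classical time sharing, and the Wang--Wornell--Zheng bound $nD(P_Z\|Q_0)\le D(P_{Z^n}\|Q_0^{\times n})\le\delta$. Write Fano as $1+\epsilon\log\nu_k$ from the start, as you eventually do; a generic $n\epsilon_n$ term would not vanish after dividing by $\sqrt{n\delta}$.

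\textbf{Achievability.} It is superposition coding with information-density threshold decoding and Chebyshev concentration of the normalized densities. This is what the paper extracts from the one-shot lemma of Liu--Cuff--Verd\'u with $V=X$, $T=\emptyset$.

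\textbf{Item 2.} It is the paper's argument that if every $P_U(u)$ stays bounded away from zero, then $\sqrt{n}\,I(U;Y_2)\to0$.

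Where you differ, you are more explicit than the paper. The paper simply asserts that a code exists whose warden output inherits covertness from the i.i.d.\ input law, leaning on a ``sufficiently large'' key. Your resolvability step over message-and-key bins, in the style of Bloch, is what actually justifies this. In item 2, you use $P_{X|U}(x|u)\le P_X(x)/\alpha=O(n^{-1/2})$ for $x\neq x_0$ when $P_U(u)\ge\alpha$, and bound each $D(P_{Y_2|U=u}\|P_{Y_2})$ by a $\chi_2$-distance. This gives $I(U;Y_2)=O(1/n)$ directly, instead of through the paper's Taylor expansion in $\mu$.

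\textbf{The step that fails.} The ``scaling argument'' meant to place achievable pairs in $\mathcal{L}^{(I)}_{n,\delta}$ for \emph{every} $n$ does not work. A blocklength-$n$ code yields a feasible single-letter law only at its own scale, with constraint $\delta/n$ and prefactor $\sqrt{n/\delta}$. Nothing transports it to a smaller $n'$.

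In fact, the literal intersection over all $n\ge1$ is too small. The set $\mathcal{L}^{(I)}_{1,\delta}$ forces $L_1\le C_1/\sqrt{\delta}$, where $C_1$ is the ordinary capacity of $P_{Y_1|X}$. This is below the $\delta$-independent covert capacity $L_1^*$ once $\delta>(C_1/L_1^*)^2$. Yet rates $L_1$ arbitrarily close to $L_1^*$ are achievable for every $\delta$ by single-user covert coding to user 1.

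The paper's converse makes no such claim. Fano plus time sharing puts $((1-\epsilon)L_1,(1-\epsilon)L_2)$, up to vanishing terms, into $\mathcal{L}^{(I)}_{n,\delta}$ only at the blocklength of the code. So what is actually proved is the theorem with $\mathcal{L}^{(I)}_\delta$ read as the $n\to\infty$ limit of $\mathcal{L}^{(I)}_{n,\delta}$. That is also how the paper treats the region afterwards (``it involves the limit $n\to\infty$''). Drop the scaling step and state your converse for that reading.
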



 See Appendix~\ref{sec:main1_proof}.

Note that although the size of $\alphabetU$ is finite and fixed,  
${\set L}_\delta^{(I)}$
is still not a computable result, since it involves the limit $n\rightarrow\infty$.  The following 
computable region $\tilde{\set L}^{(I)}$ coincides with ${\set L}_\delta^{(I)}$, which is stated in Theorem~\ref{cor:L_region_2} and proved in 
 Section~\ref{sec:main_computable} ahead. 

Let $\tilde{\set L}^{(I)}$ stand for the collection of  nonnegative pairs $(L_1,L_2)$  satisfying: 
\begin{subequations} 
\begin{IEEEeqnarray}{rCl}
L_1 &\leq&\sqrt{\frac{2}{ \chi_{2}(\nu) }}   \bigg[ (1\!-\!\nu)  \sum_x  \!\tilde{P}_X^A(x) D\left(P_{Y_1|X}(\cdot|x)||P_{Y_1|X}(\cdot|x_0) \right) 
              \nonumber \\
              &&\hspace{2cm}+ \nu  I^B(Y_1; X| U) 
           \bigg]   \label{subeq:computable_region_L1C}\\
L_2 &\leq& \sqrt{\frac{2}{ \chi_{2}(\nu) }}  \bigg[   \nu\sum_x {P}_X^B(x) D\left(P_{Y_2|X}(\cdot|x)||P_{Y_2|X}(\cdot|x_0) \right)   \nonumber \\
&&. \hspace{2cm} - \nu  I^B(Y_2;X|U)  \bigg],
               \label{subeq:computable_region_L2C}          
\end{IEEEeqnarray}
\end{subequations}
for some $\nu \in [0,1]$, auxiliary alphabet $\mathcal{U}$ of size not exceeding $|\mathcal{X}|+1$, a  singleton set $A =\{ u_0\} \subset \mathcal{U}$, its complement $B= \mathcal{U}\backslash A$, and pmfs 
$P_{UX}^{B}$ over $B\times \mathcal{X}$ and $\tilde{P}_{X}^{A}$ over $\mathcal{X}\backslash\{x_0\}$, where in the above   mutual informations are  calculated with respect to the  pmf $P_{UX}^B$ and
\begin{IEEEeqnarray}{rCl} 
 \chi_{2}(\nu) & := & \chi_{2} \left( (1- \nu) \tilde{P}_Z^A +\nu {P}_Z^B  \| Q_0 \right),
 \end{IEEEeqnarray} 
 where $\chi_2(\cdot \| \cdot)$ denotes the $\chi_2$-distance: 
 \begin{equation} 
 \chi_2\left( P \| Q\right) :=  \sum_{z \in \mathcal{Z}} \frac{\left(  {P}(z) -Q(z) \right)^2}{Q(z)}. 
 \end{equation} 
%
%

\begin{theorem}
\label{cor:L_region_2} 
It holds that  $\tilde{\set L}^I= {\set L}_\delta^{(I)}$.
\end{theorem}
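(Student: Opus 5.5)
We prove the two inclusions separately, using an asymptotic parametrization of the distributions $P_{U,X}$ that are feasible for ${\set L}_{n,\delta}^{(I)}$. The guiding picture comes from part 2) of Theorem~\ref{theo:main1}. Any sequence giving positive $L_2$ must place a vanishing mass $\mu_n:=P_U(B)\to 0$ on a set $B\subset\alphabetU$. The covertness constraint~(\ref{eq:div_constraint_limit}) forces $P_X\to\delta_{x_0}$, and hence $P_X(x)=O(n^{-1/2})$ for $x\neq x_0$.

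We therefore write the input law as a mixture. The part attached to $A=\{u_0\}$ puts mass $1-\alpha_n$ on $x_0$ and mass $\alpha_n$ spread according to $\tilde P^A_X$ on $\alphabetX\setminus\{x_0\}$. The part attached to $B$ has total weight $\mu_n$ and conditional law $P^B_{UX}$. We set $\nu$ to be the limiting ratio between $\mu_n$ and the total nonzero-symbol weight $\alpha_n(1-\mu_n)+\mu_n$. The warden output then satisfies $P_Z-Q_0=\beta_n\big((1-\nu)\tilde P^A_Z+\nu P^B_Z-Q_0\big)+o(\beta_n)$, where $\beta_n$ is the total weight. A second-order expansion gives $D(P_Z\|Q_0)=\tfrac{\beta_n^2}{2}\chi_2(\nu)(1+o(1))$, and Conditions~\ref{cond:Q0}a)--b) ensure $\chi_2(\nu)>0$ and finite. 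Saturating~(\ref{eq:div_constraint_limit}) gives $\beta_n=\sqrt{2\delta/(n\chi_2(\nu))}\,(1+o(1))$, which explains the prefactor $\sqrt{2/\chi_2(\nu)}$ once we multiply by $\sqrt{n/\delta}$.

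\textbf{Achievability ($\tilde{\set L}^{(I)}\subseteq{\set L}_\delta^{(I)}$).} Fix $\nu$, $\tilde P^A_X$ and $P^B_{UX}$, and build the $n$-dependent $P_{U,X}$ above with $\beta_n$ chosen slightly below the saturating value, so that~(\ref{eq:div_constraint_limit}) holds for every $n$. We then expand the two mutual informations to first order in $\beta_n$.

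For $L_1$ we use $I(X;Y_1|U)=P_U(u_0)I(X;Y_1|U=u_0)+\mu_n I^B(X;Y_1|U)$. The low-weight expansion for $U=u_0$ is the standard one of~\cite{Bloch:16p}, which uses Condition~c):
\[
I(X;Y_1|U=u_0)=\alpha_n\sum_x\tilde P^A_X(x)D\big(P_{Y_1|X}(\cdot|x)\|P_{Y_1|X}(\cdot|x_0)\big)+o(\alpha_n).
\]

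For $L_2$ we write $I(U;Y_2)=I(X;Y_2)-I(X;Y_2|U)$. The first term expands as $\beta_n\sum_x P_X\text{-weights}\cdot D(\cdot\|\cdot)+o(\beta_n)$. The $A$-contributions cancel against the $u_0$ part of $I(X;Y_2|U)$ up to $o(\beta_n)$, leaving $\nu\sum_x P^B_X(x)D(P_{Y_2|X}(\cdot|x)\|P_{Y_2|X}(\cdot|x_0))-\nu I^B(Y_2;X|U)$.

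Multiplying by $\sqrt{n/\delta}$ and taking $n\to\infty$ gives exactly the bounds of $\tilde{\set L}^{(I)}$. To obtain membership in ${\set L}_{n,\delta}^{(I)}$ for every $n$, and not only asymptotically, we use a slack argument: the limit region is approached from inside after a small shrinkage. This relies on closedness, since the intersection in~(\ref{eq:cap_intersection}) is evaluated pairwise.

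\textbf{Converse (${\set L}_\delta^{(I)}\subseteq\tilde{\set L}^{(I)}$).} Take a convergent sequence $P_{U,X}$ as normalized before Theorem~\ref{theo:main1}, with $|\alphabetU|\le|\alphabetX|+1$. Partition $\alphabetU$ into the atoms whose conditional $P_{X|U=u}$ converges to $\delta_{x_0}$ and those whose mass $P_U(u)$ vanishes. Merging the first group into the single symbol $u_0$ can only increase $I(X;Y_1|U)$, via the concavity/merge inequality, and keeps $I(U;Y_2)$ asymptotically unchanged to first order, because all these atoms are perturbations of $x_0$ whose $Y_2$-laws agree to first order. We pass to further subsequences so that the normalized conditionals $\tilde P^A_X$ and $P^B_{UX}$ and the ratio $\nu$ converge. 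The same expansions then show that $\sqrt{n/\delta}\,I(\cdot)$ is bounded by the limiting expressions with $\chi_2(\nu)$ evaluated at the limit point. This requires continuity of $\chi_2(\nu)$ and of the divergence terms, which holds by Conditions~\ref{cond:Q0}.

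\textbf{Main obstacle.} We expect the hardest part to be the uniformity of the expansions in the converse. We need $o(\beta_n)$ remainders that are uniform over the non-fixed conditionals, including atoms of $B$ whose $P_{X|U}$ itself depends on $n$. We also need to justify the reduction to a single singleton $A$. For this we will use the explicit Taylor bounds of~\cite{Bloch:16p} together with the finite alphabets, treating separately those $B$-atoms that themselves degenerate to $x_0$ by assigning them to $A$.
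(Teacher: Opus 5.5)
Your proposal is correct in substance and uses the same core idea as the paper. That idea has four parts: a heavy part of $\mathcal{U}$ perturbed by $O(n^{-1/2})$ around $x_0$ plus a set $B$ of vanishing mass; first-order expansions of $I(X;Y_1|U)$ and $I(U;Y_2)$; a second-order expansion of $D(P_Z\|Q_0)$; and saturation of the covertness constraint. The execution differs in two useful ways.

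\textbf{Parametrization.} The paper works with two parameters $(\mu_1,\mu_2)$, computes gradients and the Hessian at the origin by explicit differentiation, and only then sets $(\eta_1,\eta_2)=c(\nu,1-\nu)$. You parametrize directly by the total perturbation weight $\beta_n$ and the split $\nu_n$. You use the exact identity $P_Z-Q_0=\beta_n\bigl((1-\nu_n)\tilde{P}_Z^A+\nu_n P_Z^B-Q_0\bigr)$ together with $I(X;Y_1|U)=\sum_u P_U(u)I(X;Y_1|U=u)$ and $I(U;Y_2)=I(X;Y_2)-I(X;Y_2|U)$. As a result only the single-user low-weight expansion is needed. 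This also avoids the missing factor $2$ on the cross term in~\eqref{subeq:computable_region_gamma_eta}; the $\chi_2(\nu)$ in the theorem statement is the correct form.

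\textbf{Converse inclusion.} For $\mathcal{L}_\delta^{(I)}\subseteq\tilde{\mathcal{L}}^{(I)}$ the paper only asserts that its structured family is general enough and that $A$ may be a singleton. Your merging step actually proves it. Merging can only increase $I(X;Y_1|U)=I(X;Y_1)-I(U;Y_1)$, by data processing. An atom with $\alpha_u:=1-P_{X|U}(x_0|u)\to 0$ contributes at most $P_U(u)\alpha_u^2\max_x\chi_2\bigl(P_{Y_2|X}(\cdot|x)\|P_{Y_2|X}(\cdot|x_0)\bigr)=o(n^{-1/2})$ to $I(U;Y_2)$, since $P_U(u)\alpha_u=O(n^{-1/2})$.

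You also need less uniformity than you fear. The identity $I(V;Y)=\sum_v P_V(v)D\bigl(P_{Y|V}(\cdot|v)\|P_{Y|X}(\cdot|x_0)\bigr)-D\bigl(P_Y\|P_{Y|X}(\cdot|x_0)\bigr)$ can be applied with $V=X$ given $U=u_0$, and with $V=U$. Dropping the last term gives the rate upper bounds exactly, with no remainder. So uniform control is needed only for the lower bound on $D(P_Z\|Q_0)$. There, compactness and $\chi_2>0$ at the limit point (by Condition~a)) suffice.

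\textbf{Step to drop.} Remove, rather than repair, the slack argument that places points of $\tilde{\mathcal{L}}^{(I)}$ in $\mathcal{L}_{n,\delta}^{(I)}$ for \emph{every} $n\ge 1$. Your expansions are only asymptotic, and the literal claim is false. For $n=1$ one has $\sqrt{1/\delta}\,I(X;Y_1|U)\le\sqrt{1/\delta}\log|\mathcal{X}|$, which is below $L_1^*$ once $\delta$ is large (when $L_1^*>0$). Hence $(L_1^*,0)\in\tilde{\mathcal{L}}^{(I)}$ lies outside the literal intersection~\eqref{eq:cap_intersection}, and no shrinkage fixes this. The intersection has to be read asymptotically (a limit over large $n$, up to closure), as the paper implicitly does. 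Under that reading, your asymptotic evaluation already gives the achievability inclusion.
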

The proof of above theorem is given in Section~\ref{sec:main_computable}.

The time-sharing region ${\set L}^{(\textnormal{TS})}$  is obviously  included in our region, see  Appendix~\ref{app:Ts}.

\subsection{Comparison to Time-sharing and Numerical Examples}
In \cite{TanLee:19p}, it was shown that when~\eqref{eq:Time_sharing_c1} holds,
time-sharing is optimal and suffices to achieve $\mathcal{L}_{\delta}$. We will reprove this result using our 
capacity-expression in \eqref{subeq:computable_region_L1C} and \eqref{subeq:computable_region_L2C}.

Notice first that  \eqref{eq:Time_sharing_c1} in particular holds when $X$ is  binary with probability $1-\alpha$ equal to $x_0$ and  with probability $\alpha$ equal to $x$, for arbitrary $x\in \mathcal{X}\backslash \{x_0\}$  and $\alpha>0$. Letting $\alpha \to 0$, we can conclude that for any $x \in \mathcal{X}\backslash\{x_0\}$: 
\begin{equation} 
\frac{L_1^*}{L_2^*} \geq \lim_{\alpha \to 0} \frac{I(X;Y_1)}{I(X;Y_2)} = \frac{D(P_{Y_1|X}
(\cdot|x)  \|P_{Y_1|X}(\cdot|x_0))}{  D(P_{Y_2|X}(\cdot|x) \| P_{Y_2|X}(\cdot|x_0))},\label{eq:ineqD}
\end{equation} because for above choice of $X$ and when $\alpha \to 0$ we have $I(X;Y_k)= \alpha D(P_{Y_k|X}(\cdot|x) \| P_{Y_k|X}(\cdot|x_0))\cdot (1+o(1))$, for $k =1,2$.

If in the following expression we apply above inequality \eqref{eq:ineqD} and \eqref{eq:Time_sharing_c1} on the individual summands, we can write
 \begin{IEEEeqnarray}{rCl}
\lefteqn{\nu  \sum_{u \in B}  P_U^B(u)  \frac{ I^B(Y_1; X| U=u) }{L_1^*} } \qquad \nonumber \\[-1ex]
&+&
             (1-\nu) \sum_x \tilde{P}_X^A(x) \frac{ D\left(P_{Y_1|X}(\cdot|x)\|P_{Y_1|X}(\cdot|x_0) \right) }{L_1^*}  \nonumber\\
 &\leq&   \nu  \sum_{u \in B}  P_U^B(u)  \frac{ I^B(Y_2; X| U=u) }{L_2^*} \nonumber \\[-1ex]
 &&           +  (1-\nu) \sum_x \tilde{P}_X^A(x) \frac{ D\left(P_{Y_2|X}(\cdot|x)||P_{Y_2|X}(\cdot|x_0) \right) }{L_2^*}.\IEEEeqnarraynumspace \label{eq:ineq}
\end{IEEEeqnarray}

Plugging \eqref{eq:ineq} into the upper bound on $\frac{L_1}{L_1^*}+\frac{L_2}{L_2^*}$ obtained from \eqref{subeq:computable_region_L1C}--\eqref{subeq:computable_region_L2C}, allows to conclude that for channels satisfying \eqref{eq:Time_sharing_c1},  any achievable pair $(L_1,L_2)$ lies in the time-sharing region $\mathcal{L}^{(\textnormal{TS})}$ defined in \eqref{eq:TS}.  
For details, see Appendix~\ref{app:TS_opt}.

\begin{example} 
Consider a ternary input alphabet $\mathcal{X}=\{0,1,2\}$ and quaternary output alphabet 
$\mathcal{Y}_1=\mathcal{Y}_2=\mathcal{Z}=\{0,1,2,3\}$. Let $x_0=0$ and consider the following channel transition laws 
\begin{equation}  P_1= \left[ \begin{matrix} 0.2& 0.28& 0.28& 0.24\\ 
       0.05& 0.1&  0.45& 0.4\\
      0.07& 0.37& 0.4& 0.16.
\end{matrix}
\right]
\end{equation}
and
\begin{equation}  P_2= \left[ \begin{matrix} 
   0.1884&   0.324  & 0.232  & 0.2556\\
   0.0515 &  0.215  & 0.331   &0.4025\\
   0.0744  & 0.399 &  0.326  & 0.2006
   \end{matrix}
\right]
\end{equation}
for the legitimate receivers and 
\begin{equation} 
  Q=\left[ \begin{matrix}0.20 & 0.19 &  0.36& 0.25\\
     0.01&  0.37 &0.17&  0.45\\
   0.42& 0.35 & 0.05 &  0.18
\end{matrix}
\right]
\end{equation} 
for the warden. 
  Notice that the channel from $Y_1$ to $Y_2$ is stochastically degraded  because we can write 
$P_2=P_1\cdot W$ for 
\begin{equation} 
  W=\left[ \begin{matrix}0.9& 0.1&0& 0\\
  0.02& 0.8& 0.12& 0.06\\
  0.01& 0.2& 0.7& 0.09\\
  0&0.1& 0.01& 0.89\end{matrix}
\right].
\end{equation} 
  The covert capacities for the two single-user  channels $P_1$ and $P_2$ in the presence of the warden 
  $Q$ are $L_1^*=  0.46809$ and $L_2^* =  0.28590$. Figure~\ref{fig:ex1} shows the  boundary of the region 
  $\mathcal{L}_{\delta}$ (solid line) and the boundary of the time-sharing region ${\set L}^{(\textnormal{TS})}$ (dashws line). 
  We observe that for this example, superposition coding improves over time-sharing whenever  $L_1<L_1^*$ or $L_2< L_2^*$. 

\begin{figure}[h!]
\centering
\begin{tikzpicture}
\begin{axis}[%
     width=2.5in,
    height=1.5in,
    at={(0in,0in)},
    scale only axis,
    xmin=0,
    xmax=0.5,
    xlabel style={font=\color{white!15!black}},
    xlabel={Rate $L_1$},
    ymin=0,
    ymax=0.3,
    xtick={0, 0.2, 0.4},
        ytick={0, 0.1, 0.2, 0.3},
    ylabel style={font=\color{white!15!black}},
    ylabel={Rate $L_2$},
    axis background/.style={fill=white},
    title style={font=\bfseries},
    axis x line*=bottom,
    axis y line*=left,
    legend style={
        legend cell align=left, 
        align=left,       
        draw=white!15!black,
        at={(.8,0.85)}, 
        anchor=west}
]
\addplot [color=blue, solid,line width=1.5pt]
  table[row sep=crcr]{
   0.46809   0.00000\\
   0.46830   0.00033\\
   0.46831   0.00034\\
   0.46832   0.00036\\
   0.46832   0.00037\\
   0.46833   0.00039\\
   0.46835   0.00044\\
   0.46836   0.00048\\
   0.46836   0.00050\\
   0.46836   0.00056\\
   0.46836   0.00058\\
   0.46836   0.00062\\
   0.46835   0.00065\\
   0.46834   0.00067\\
   0.46832   0.00072\\
   0.46829   0.00077\\
   0.46821   0.00086\\
   0.46811   0.00095\\
   0.46805   0.00100\\
   0.46268   0.00512\\
   0.46144   0.00606\\
   0.45515   0.01086\\
   0.44906   0.01549\\
   0.44672   0.01728\\
   0.41729   0.03964\\
   0.39896   0.05356\\
   0.39875   0.05371\\
   0.39232   0.05857\\
   0.38543   0.06355\\
   0.37261   0.07277\\
   0.36466   0.07839\\
   0.36210   0.08014\\
   0.35591   0.08433\\
   0.32594   0.10459\\
   0.32073   0.10808\\
   0.31742   0.11026\\
   0.31582   0.11131\\
   0.31425   0.11234\\
   0.24808   0.15522\\
   0.24544   0.15692\\
   0.24217   0.15903\\
   0.23935   0.16084\\
   0.23914   0.16098\\
   0.23246   0.16522\\
   0.23126   0.16598\\
   0.23010   0.16671\\
   0.22898   0.16742\\
   0.22789   0.16810\\
   0.17023   0.20428\\
   0.15591   0.21327\\
   0.14548   0.21982\\
   0.14308   0.22126\\
   0.14066   0.22262\\
   0.13822   0.22391\\
   0.13443   0.22570\\
   0.09593   0.24374\\
   0.04939   0.26472\\
   0.00000   0.28590\\
};
\addlegendentry{$\mathcal{L}^*_{\delta}$}

\addplot [color=red, dashed,line width=1.5pt]
  table[row sep=crcr]{%
   0.00000   0.28590\\
   0.46809   0.00000\\
};
\addlegendentry{$\mathcal{L}^{(\textnormal{TS})}$}

\end{axis}
\end{tikzpicture}%

\caption{Illustration of the capacity region $\mathcal{L}_\delta$ (solid line) and the time-sharing region $\mathcal{L}^{(\textnormal{TS})}$  (dashed line).}
\label{fig:ex1}
\end{figure}
 
\end{example} 

\begin{example} 
 Consider a second example with binary inputs $\mathcal{X}=\{0,1\}$, for $x_0=0$, 
 and ternary outputs $\mathcal{Y}_1=\mathcal{Y}_2=\mathcal{Z}=\{0,1,2\}$. 
 Let the channel to the strong receiver $P_1$ be a BSC$(0.2)$ and the channel to the warden $Q$ be a BSC($0.4$).  
 The channel to the weaker receiver $P_2=P_1 \cdot W$, for
  \begin{equation}  W= \left[ \begin{matrix} 0.9& 0.1\\ 
   c & 1-c
\end{matrix}
\right],
\end{equation}
where we study different values of $c\in\{0, 0.1, 0.2, \ldots, 1\}$. Table \ref{tab:your_label} shows the maximum coefficient 
\begin{equation} 
\gamma^*= \max_{(L_1, L_2)\in \mathcal{L}_\delta} \left(\frac{L_1}{L_1^*}+  \frac{L_2}{L_2^*}\right)
\end{equation} for different values of $c$. This parameter $\gamma^*$ captures by how much one can improve over the time-sharing region, for which the parameter cannot exceed $1$. The second column of Table~\ref{tab:your_label} indicates whether the condition 
$\frac{L_1^*}{L_2^*} \geq \max_{P_X} \frac{I(X;Y_1)}{I(X;Y_2)}$ is satisfied (indicated by 1 in the table) or not (indicated with a 0). 
It has been shown in \cite{TanLee:19p} that time-sharing is optimal and thus $\gamma^*=1$ whenever the  condition is satisfied. 
Our results seem to imply that for this example one can improve over time-sharing whenever the condition does not hold.

\begin{table}[h]
\centering
\begin{tabular}{c|c|c}
\hline
$c$  & $\frac{L_1^*}{L_2^*} \geq \max_{P_X} \frac{I(X;Y_1)}{I(X;Y_2)}$ & $\gamma^*$ \\
\hline
0.0 & 1 & 1.0000 \\
0.1 & 1 & 1.0000 \\
0.2 & 0 & 1.0047 \\
0.3 & 0 & 1.0108 \\
0.4 & 0 & 1.0153 \\
0.5 & 0 & 1.0178 \\
0.6 & 0 & 1.0178 \\
0.7 & 0 & 1.0148 \\
0.8 & 0 & 1.0078 \\
0.9 & 1 & 1.0000 \\
1.0 & 1 & 1.0000 \\
\hline
\end{tabular}
\caption{Results for our binary-input channel example.}
\label{tab:your_label}
\end{table}
\end{example}

\section{Proof of Theorem~\ref{cor:L_region_2}:  Computable Capacity Characterization}
\label{sec:main_computable}
We construct here a sufficiently general joint distribution $P_{U,X}$
that adheres to~(\ref{eq:div_constraint_limit}). A fully-general distribution can be obtained by letting in the following the chosen distributions $P_U^B$, $P_U^A$, $P_{X|U}$, and $\tilde{P}_{X|U}$ depend on $n$. However, a close inspection reveals that the constraints $L_1$ and $L_2$ only depend on limiting points of these distributions and not on how they evolve with $n$. To avoid cumbersome notation,  we therefore assume these probability laws to be constant.

Let $B$ be a proper subset of $\alphabetU$
and $A$ its complement. Let $P_U^B$ (resp. $P_U^A$) be a general distribution  on $B$ (resp. on $A$),
$P_{X|U}(\cdot|u)$  a  conditional distribution on $\alphabetX$ for $u\in B$,
and $\tilde{P}_{X|U}(\cdot |u)$ a conditional distribution on $\alphabetX \setminus x_0$ for $u\in A$.
With these definitions, we set
\begin{subequations}
\label{subeq:structure_UX}
\begin{IEEEeqnarray}{rCl}
\hat{P}_U(u) &=& (1-\mu_1) P_U^A(u) +\mu_1 P_U^B(u) \label{subeq:structure_U}\\
\hat{P}_{X|U}(x|u) &=& \left\{ \,
                                        \begin{IEEEeqnarraybox}[] [c] {l?s}
                                        \IEEEstrut
                                        (1-\mu_2)\onei_{x_0}(x) + \mu_2 \tilde{P}_{X|U}(x|u) & for $u\in A$, \\
                                        P_{X|U}(x|u) & for $u\in B$,
                                        \IEEEstrut
                                        \end{IEEEeqnarraybox}
                                     \right. \nonumber \\ 
                                       \label{subeq:structure_XgU}
\end{IEEEeqnarray}
where 
\begin{equation}
\tilde{P}_{X|U}(x_0|u) = 0\quad \forall u\in A,\label{eq:Pt_XgU}
\end{equation}
\end{subequations}
and $\mu_1$, $\mu_2$ are small parameters that tend to $0$ as $n\rightarrow\infty$, at rates to be determined later. 
The structure we suggest in~(\ref{subeq:structure_U}) and~(\ref{subeq:structure_XgU})
determines $P_{U,X}$, and thus also $P_{X,Z}$ and $P_{U,X,Y_k}$, $k=1,2$.
We define below the notation needed for the characterization of the computable region.
The distribution of $X$ is given by
\begin{subequations}
\label{subeq:structure_UX_2}
\begin{IEEEeqnarray}{rCl}
\hat{P}_X(x)& =& \sum_u\hat{P}_{X|U}(x|u) \hat{P}_U(u) \\
&=& \olsi{\mu}_1\olsi{\mu}_2\onei_{x_0}(x)
  +\olsi{\mu}_1\mu_2\tilde{P}_X^A(x) +\mu_1 P_X^B(x) \label{subeq:hat_px}
\end{IEEEeqnarray}
where $\onei_{x_0}$ puts mass 1 on $x_0$, and we use the notation
\begin{IEEEeqnarray}{rCl}
\tilde{P}_X^A(x) &\eqdef& \sum_{u\in A} \tilde{P}_{X|U}(x|u) P_U^A(u)\label{subeq:tpxA_def}\\
P_X^B(x) &\eqdef& \sum_{u\in B} P_{X|U}(x|u) P_U^B(u)\label{subeq:pxB_def}
\end{IEEEeqnarray}
\end{subequations}
Note that $\tilde{P}_X^A(x_0) = 0$, and
$\hat{P}_X\rightarrow\onei_{x_0}$ as $\mu_1, \mu_2\rightarrow 0$. 
For simplicity of exposition we also define
\begin{subequations}
\label{subeq:structure_UX_3}
\label{subeq:1}
\begin{IEEEeqnarray}{rCl}
\tilde{P}_Z^A(z) &\eqdef& \sum_x  P_{Z|X}(z|x) \tilde{P}_X^A(x)\label{subeq:tilde_P_Z_A}\\
P_Z^B(z) &\eqdef& \sum_x  P_{Z|X}(z|x) P_X^B(x)\label{subeq:P_Z_B}\\
\tilde{P}_{Y_k|U}^A(y_k|u) &\eqdef& \sum_x P_{Y_k|X}(y_k|x) \tilde{P}_{X|U}(x|u)\quad\mbox{for}\ u\in A,\nonumber\\
&& \hspace{4cm}  k=1,2. \label{subeq:tP_YkgU_A}\\
P_{Y_k|U}^B(y_k|u) &\eqdef& \sum_x P_{Y_k|X}(y_k|x) P_{X|U}(x|u)\quad\mbox{for}\ u\in B,\nonumber\\&&
\hspace{4cm} k=1,2. \label{subeq:P_YkgU_B}
\end{IEEEeqnarray}
\end{subequations}
Observe that $P_U^A$, $P_U^B$ and~(\ref{subeq:tpxA_def}--\ref{subeq:P_YkgU_B}) do not depend on $\mu_1,\mu_2$. 
Define now the normalized parameters 
\begin{subequations}
\label{subeq:normalized_parameters}
\begin{IEEEeqnarray}{rCl}
\eta_1 &=&\sqrt{n/\delta}\mu_1\label{subeq:normalized_eta1}\\*
\eta_2 &=& \sqrt{n/\delta}\mu_2.\label{subeq:normalized_eta_2}
\end{IEEEeqnarray}
\end{subequations}

Theorem~\ref{cor:L_region_2} is obtained by evaluating the region $\mathcal{L}^{(I)}$ for above choice of distributions based on the Taylor expansions of the terms $I(U;Y_2)$, $I(X;Y_1|U)$
and $D(P_Z||Q_0)$ near $\mu_1=0$, $\mu_2=0$. 

As proved in Appendix~\ref{sec:computable_proof}, this Taylor expansion results in the rate expressions 
\begin{subequations}
\label{subeq:computable_region}
\begin{IEEEeqnarray}{rCl}
L_1 &\leq& \bigg[ \sum_x P_X^B(x) D\left(P_{Y_1|X}(\cdot|x)||P_{Y_1|X}(\cdot|x_0) \right) \nonumber \\&& 
              \;\; - \sum_{u\in B} P_U^B(u) D\left(P_{Y_1|U}(\cdot|u)||P_{Y_1|X}(\cdot|x_0)\right) \bigg]
              \eta_1\nonumber\\
              & & + \sum_x \tilde{P}_X^A(x) D\left(P_{Y_1|X}(\cdot|x)||P_{Y_1|X}(\cdot|x_0) \right) 
              \eta_2 \label{subeq:computable_region_L1}\\
L_2 &\leq&     \sum_{u\in B} P_U^B(u) D\left(P_{Y_2|U}(\cdot|u)||P_{Y_2|X}(\cdot|x_0)\right)\eta_1,
               \label{subeq:computable_region_L2}          
\end{IEEEeqnarray}
while the divergence constraint evaluates to
\begin{IEEEeqnarray}{rCl}
\lefteqn{\eta_1^2 \chi_2(P_Z^B||Q_0) 
  +\eta_2^2 \chi_2(\tilde{P}_Z^A||Q_0) } \quad \nonumber \\
  && + \eta_1\eta_2\sum_{z \in \mathcal{Z}} \frac{(\tilde{P}_Z^A(z) - Q_0(z))(P_Z^B(z)-Q_0(z))}{Q_0(z)} \leq 2. \IEEEeqnarraynumspace \label{subeq:computable_region_gamma_eta}
\end{IEEEeqnarray} 
\end{subequations}


\medskip
Without loss in optimality, in the parametrization above we can restrict the set ${A}$ to be a singleton 
(=because the result only depends on $\tilde{P}_X^A$)=. Similarly, the rate-constraints are loosest if $\eta_1$ and $\eta_2$ are chosen
so that constraint \eqref{subeq:computable_region_gamma_eta} is satisfied with equality. 
We thus reparametrize  $\eta_1$ and $\eta_2$ as $\eta_1=c \cdot \nu$ and $\eta_2=c \cdot (1-\nu)$ 
for $\nu\in[0,1]$   and $c>0$,  where the latter should be chosen to ensure equality in  \eqref{subeq:computable_region_gamma_eta} 
we obtain the characterization in the theorem.

\section*{Acknowledgment}

{This work was supported by the ERC under Grant Agreement 101125691.}

\begin{appendices}

\section{Proof of Theorem~\ref{theo:main1}}
\label{sec:main1_proof}

\subsection{Proof of Part~2)}

By the single-user results in~\cite{WangWornellZheng:16p}, it is clear that we can achieve
positive rates $(L_1,L_2)$ (apply a simple time-sharing scheme). Therefore, the distributions $P_{U,X}$
that maximize the outer bound in Theorem~\ref{theo:main1} 
under the constraint~(\ref{eq:div_constraint_limit}), should yield $(I(X;Y_1|U),I(U;Y_2))$
that decay like $n^{-1/2}$ as $n\rightarrow\infty$. We claim that this can be achieved only
when the $U$-marginal of $P_{U,X}$ has a set $B\subset\alphabetU$ whose probability decays to $0$
as $n\rightarrow\infty$.

\begin{proposition}
\label{prop:decay_U}
$I(U;Y_2) =O(n^{-1/2})$ only if there exists a subset $B\subset\alphabetU$ such that
\begin{subequations}
\label{subeq:decay_U}
\begin{IEEEeqnarray}{rCl}
P_U(B) &>& 0\label{subeq:decay_U_pos}\\
\limn P_U(B) &=& 0. \label{subeq_decay_U_lim}
\end{IEEEeqnarray}
\end{subequations}
\end{proposition}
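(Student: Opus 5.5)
We read the proposition as Part~2) of Theorem~\ref{theo:main1} requires: if $\sqrt{n/\delta}\,I(U;Y_2)$ stays bounded away from $0$, then a set $B$ as in (\ref{subeq:decay_U}) must exist. The plan is to prove the contrapositive. Assume no such $B$ exists. Passing to a subsequence if necessary, every $u\in\alphabetU$ then either has $P_U(u)=0$ eventually or has $P_U(u)\geq \kappa_0>0$ for some constant $\kappa_0$. We will show that in this case $I(U;Y_2)=O(1/n)$, so that $\sqrt{n/\delta}\, I(U;Y_2)\to 0$ and no positive $L_2$ is obtained through (\ref{eq:L2_limit}).

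\textbf{Step 1: the input must be close to $x_0$.} Let $p:=1-P_X(x_0)$. Then
\[
P_Z=(1-p)Q_0+p\,R ,
\]
where $R$ is a mixture of $P_{Z|X}(\cdot|x')$ over $x'\neq x_0$. By part a) of Conditions~\ref{cond:Q0}, $Q_0$ is not in the convex hull of these laws. Since that hull is compact, there is $\kappa>0$ with $\|R-Q_0\|_1\geq\kappa$, and hence $\|P_Z-Q_0\|_1\geq \kappa p$. Pinsker's inequality combined with (\ref{eq:div_constraint_limit}) gives $\kappa^2p^2/2\leq \delta/n$, so $p=O(n^{-1/2})$.

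\textbf{Step 2: each conditional input law is close to $x_0$.} For every $u$ in the (eventual) support, let $\epsilon_u:=1-P_{X|U}(x_0|u)$. Then $P_U(u)\,\epsilon_u\leq p$. Since $P_U(u)\geq\kappa_0$, this gives $\epsilon_u\leq p/\kappa_0=O(n^{-1/2})$ uniformly over $u$. This is exactly where the absence of a vanishing-but-positive set $B$ is used: a set $B$ with $P_U(B)\to0$ is the only way to allow $\epsilon_u$ to be of order one.

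\textbf{Step 3: bound the mutual information.} Write
\[
I(U;Y_2)=\sum_u P_U(u)\,D\big(P_{Y_2|U}(\cdot|u)\,\|\,P_{Y_2}\big),
\]
and bound each divergence by the $\chi_2$-distance. Both $P_{Y_2|U}(\cdot|u)$ and $P_{Y_2}$ differ from $W_0:=P_{Y_2|X}(\cdot|x_0)$ by at most $O(\max_u\epsilon_u)$ in each coordinate. By part c) of Conditions~\ref{cond:Q0}, all these laws are supported on $\mathrm{supp}(W_0)$, on which $W_0$ is bounded below by a positive constant, and so is $P_{Y_2}$ for $n$ large. Therefore each $\chi_2$ term is $O(\max_u\epsilon_u^2)=O(p^2)=O(1/n)$. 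Consequently $I(U;Y_2)=O(1/n)$ and $\sqrt{n/\delta}\,I(U;Y_2)\to0$, which proves the claim.

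The main obstacle is Step~1: obtaining a lower bound on $D(P_Z\|Q_0)$ that is quadratic in $p$ with a constant uniform over all mixtures $R$. The non-redundancy condition together with compactness of the convex hull is what supplies this uniform constant $\kappa$. The support issue in Step~3 (denominators in $\chi_2$ bounded away from zero) is handled by the absolute-continuity condition~c).
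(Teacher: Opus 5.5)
Your proof is correct and has the same skeleton as the paper's: covertness forces $1-P_X(x_0)=O(n^{-1/2})$. Every $u$ whose probability stays bounded below must then have $P_{X|U}(\cdot|u)$ concentrating on $x_0$, and consequently $I(U;Y_2)$ is only second order in the perturbation.

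The tools differ. The paper imports the structure $\hat P_X=(1-\mu)\onei_{x_0}+\mu\tilde P_X$ with $\mu=O(n^{-1/2})$ from Wang--Wornell--Zheng. It then Taylor-expands $I(U;Y_2)$ in the single parameter $\mu$ and shows that the first derivative $\sum_{u,y_2}[\partial_\mu P_{U,Y_2}(u,y_2)]\log\frac{P_{Y_2|U}(y_2|u)}{P_{Y_2}(y_2)}$ vanishes, because all log-ratios tend to zero. Hence $I(U;Y_2)=O(\mu^2)$. You instead prove $p=O(n^{-1/2})$ directly, from non-redundancy, compactness of the convex hull and Pinsker. You then replace the expansion by the bound $D\le\chi_2$ with explicit constants $\kappa$, $\kappa_0$ and $\min_{y\in\mathrm{supp}(W_0)}W_0(y)$.

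Your route buys robustness. The paper's one-parameter expansion tacitly holds $P_U$, $P_{X|U}$ and $\tilde P_X$ fixed, although they vary with $n$. Its $O(\mu^2)$ remainder therefore needs a uniformity argument, which your quantitative bound supplies for free. You also make explicit where condition c) is used: a symbol $y_2$ outside the support of $P_{Y_2|X}(\cdot|x_0)$ would reveal $U$ and make $I(U;Y_2)$ of order $p$. The paper uses c) only implicitly, when it asserts that the log-ratios tend to $\log 1$.

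The paper's derivative machinery buys something else: it is reused in the derivation of the computable region to obtain the actual first-order coefficient $\sum_{u\in B}P_U^B(u)D(P^B_{Y_2|U}(\cdot|u)\|P_{Y_2|X}(\cdot|x_0))$ once a vanishing set $B$ is present.

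Finally, reading $O(n^{-1/2})$ as ``exact order $n^{-1/2}$'' is the intended meaning; the paper's closing sentence uses $O(\mu)$ in the same sense.
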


\noindent
{\bf Proof:} The requirement~(\ref{eq:div_constraint_limit}) implies 
the following structure on $P_X$ (\cite[eq.~(32)]{WangWornellZheng:16p}):
\begin{equation}
\hat{P}_{X}(x) = (1-\mu)\onei_{x_0}(x) +\mu\tilde{P}_X(x) \label{eq:Phat_X}
\end{equation}
where $\onei_{x_0}(x)$ (resp. $\tilde{P}_X$) puts mass 1 (resp. 0) on $x_0$, and
\begin{equation}
\mu = O(n^{-1/2}).\label{eq:mu_limit}
\end{equation}
Due to  the Markov chain $U\markov X\markov Y_2$, (\ref{eq:Phat_X}) and~(\ref{eq:mu_limit}), we have
\begin{equation}
\limn I(U;Y_2) = 0.\label{eq:I_UY2_lim}
\end{equation}
Therefore we can write
\begin{equation}
I(U;Y_2) = \left. \frac{\partial}{\partial \mu} I(U;Y_2)\right\rvert_{\mu=0} \mu +O(\mu^2), \label{eq:taylor_I}
\end{equation}
where
\begin{IEEEeqnarray}{rCl}
\frac{\partial}{\partial \mu} I(U;Y_2) &=& \frac{\partial}{\partial \mu} \sum_{u,y_2} P_{U,Y_2}(u,y_2)
         \log\frac{P_{U,Y_2}(u,y_2)}{P_U(u)P_{Y_2}(y_2)}\nonumber\\
       &=& \sum_{u,y_2} \left[ \pdone{\mu} P_{U,Y_2}(u,y_2)\right] \log\frac{P_{Y_2|U}(y_2|u)}{P_{Y_2}(y_2)}. \IEEEeqnarraynumspace\label{eq:der_1_I}     
 \end{IEEEeqnarray}
 For the proof of~(\ref{eq:der_1_I}), see Appendix~\ref{appendix:proof_der1_I}. 
 Fix $\alpha>0$, and assume that 
 \begin{equation}
 P_U(u')\geq \alpha\quad \forall n \label{eq:uprime_condition}
\end{equation}
for some $u'\in\alphabetU$. Then we must have
\begin{equation}
\limn P_{X|U}(x_0|u') = 1, \label{eq:uprime_condition2}
\end{equation} 
as otherwise~(\ref{eq:div_constraint_limit}) does not hold.
Thus, if~(\ref{eq:uprime_condition}) holds for all $u\in\alphabetU$,  
then~(\ref{eq:uprime_condition2}) holds for all elements of $\alphabetU$,
resulting in
\begin{IEEEeqnarray}{rCl}
\limn P_{Y_2|U}(y_2|u) &=& \limn \sum_x P_{Y_2|X}(y_2|x)P_{X|U}(x|u) \IEEEeqnarraynumspace\\
& = &P_{Y_2|X}(y_2|x_0).
\label{eq:y2u_1}
 \end{IEEEeqnarray}
Moreover, by~(\ref{eq:Phat_X}), we also have
\begin{equation}
\limn P_{Y_2}(y_2) = P_{Y_2|X}(y_2|x_0). \label{eq:y2u_2}
\end{equation}
By the structure of $\hat{P}_X$
the derivative of $P_{U,Y_2}$
according to $\mu$ is bounded, so~(\ref{eq:der_1_I}), (\ref{eq:y2u_1}), (\ref{eq:y2u_2})
and~(\ref{eq:taylor_I})
yield
\begin{equation}
I(U;Y_2) = O(\mu^2).
\end{equation}
 Therefore, a necessary condition for having $I(U;Y_2)= O(\mu)$ is that some of  the elements
 of $\alphabetU$ have vanishing probabilities as $n\rightarrow\infty$.
\QEDB

\subsection{Converse  for Part 1)}
\label{subsec:main1_converse_proof}
Assume we have a sequence of 
$(n,2^{\sqrt{n\delta}L_1},2^{\sqrt{n\delta}L_2},\epsilon_n,\delta)$ codes with $\limn\epsilon_n=0$.
Denote by $M_k$ the  random message for user $k$, $k=1,2$.
By Fano's inequality
\begin{IEEEeqnarray}{rCl}
\sqrt{n\delta}L_2(1-\epsilon_n)&\stackrel{(a)}{=}&\log\nu_2(1-\epsilon) \\
&\leq & I(M_2;Y_2^n|S) \\
&= &\sum_{i=1}^n I(M_2;Y_{2,i}|S,Y_2^{i-1})\\
&\leq& \sum_{i=1}^n I(M_2Y_2^{i-1};Y_{2,i}|S)\\
&\leq& \sum_{i=1}^n I(M_2Y_2^{i-1}Y_1^{i-1};Y_{2,i}|S) \\
&\stackrel{(b)}{=} &\sum_{i=1}^n I(M_2Y_1^{i-1};Y_{2,i}|S)\\
&\leq& \sum_{i=1}^n I(M_2Y_1^{i-1}S;Y_{2,i}),\label{eq:Fano_M2}
\end{IEEEeqnarray}
where $(a)$ is by~(\ref{eq:covert_rates}) and in $(b)$ we use the Markov chain $X\markov Y_1\markov Y_2$. Similarly,
\begin{IEEEeqnarray}{rCl}
\sqrt{n\delta}L_1(1-\epsilon_n) &\leq& I(M_1;Y_1^n|S,M_2) \\
&=& \sum_{i=1}^n I(M_1;Y_{1,i}|S,M_2,Y_1^{i-1})
                                                                \\
                  &\stackrel{(a)}{=}& \sum_{i=1}^n I(M_1X_i;Y_{1,i}|S,M_2,Y_1^{i-1})\\
                  &\stackrel{(b)}{=}&
                                                 \sum_{i=1}^n I(X_i;Y_{1,i}|S,M_2,Y_1^{i-1})\label{eq:Fano_M1}
\end{IEEEeqnarray}
where $(a)$ holds since $X^n$ is a deterministic function 
of $(M_1,M_2)$ and $(b)$ due to the Markov chain
\begin{equation}
(M_1,M_2,Y_1^{i-1},Y_2^{i-1},S)\markov X_1\markov (Y_{1,i}Y_{2,i}).\label{eq:Markov2}
\end{equation}
Define
\begin{equation}
U_i = (M_2,Y_1^{i-1},S),\label{eq:Ui_def}
\end{equation}
so that after normalization~(\ref{eq:Fano_M2}, \ref{eq:Fano_M1}) read 
\begin{IEEEeqnarray}{rCl}
 L_2(1-\epsilon_n) &\leq& \frac{1}{\sqrt{n\delta}}\sum_{i=1}^n I(U_i;Y_{2,i}) \label{eq:L2}\\
L_1(1-\epsilon_n) &\leq& \frac{1}{\sqrt{n\delta}} \sum_{i=1}^n I(X_i;Y_{1,i}|U_i). \label{eq:L1}
\end{IEEEeqnarray}
The bounds~(\ref{eq:L1_limit}, \ref{eq:L2_limit}) 
follow from~(\ref{eq:L2}, \ref{eq:L1}) by the classical time sharing argument.
Inequality~(\ref{eq:div_constraint_limit}) is proved exactly like~\cite[eq. (9), Theorem 1]{WangWornellZheng:16p}.

\subsection{Direct Part for Part 1)}
\label{subsec:main1_direct_proof}

The proof of the achievability part of Theorem~\ref{theo:main1}, proceeds along the following  steps:
\begin{steps}
  \item Obtain a layered (superposition) version of Feinstein's Lemma~\cite{Feinstein:54p} for the BC, from the results
           of~\cite{LiuCuffVerdu:15c}.

  \item Show that distributions $P^{(n)}_{U,X}$  with $X$ marginal satisfying~(\ref{eq:div_constraint_limit}) 
             stabilize the information spectrum expressions
            of Step 1. I.e., the information spectrum random variables converge, as $n$
            tends to $\infty$, to the mutual information functions of the outer bound.
\end{steps}

Fix a joint distribution $P_{UVTX}$ such that the Markov chain $(U,V,T)\markov X\markov (Y_1,Y_2)$
holds.
We use the following notation for the mutual information random variables~\cite{VerduHan:94p,Han:02b}:
\begin{subequations}
\label{subeq:information_rv_def}
\begin{IEEEeqnarray}{rCl}
i_{X;Y_1}(X;Y_1) &=& \log\frac{P_{Y_1|X}(Y_1|X)}{P_{Y_1}(Y_1)}\label{subeq:information_rv_def1}\\
i_{V;Y_1|U}(V;Y_1|U) &=& \log\frac{P_{Y_1|U}(Y_1|U,V)}{P_{Y_1|U}(Y_1|U)}\label{subeq:information_rv_def2}
\end{IEEEeqnarray}
\end{subequations}
and similarly for $i_{U;Y_2}(U;Y_2)$ etc.

{\bf Step 1.}  For convenience, we state here the one-shot coding result of~\cite{LiuCuffVerdu:15c}
for general BCs.
Note that the alphabets are of arbitrary size, hence there is no dependence on $n$.
We use the notation of~\cite{LiuCuffVerdu:15c}, but do not repeat their definitions,
for space considerations.
\begin{theorem}[Theorem~10 in \cite{LiuCuffVerdu:15c}]
\label{theo:general_BC}
Fix a BC $P_{Y_1,Y_2|X}$, a joint distribution $P_{UVT}$, a map
$f:\alphabetU\times\alphabetV\times\alphabetT\rightarrow\alphabetX$,
and integers $M_0$, $M_{1,0}$, $M_{2,0}$, $N$, $L$, $\hat{N}$ and $\hat{L}$. Set
\begin{subequations}
\label{subeq:one_shot_bc_par}
\begin{IEEEeqnarray}{rCl}
M&=& M_0 M_{1,0} M_{2,0},\label{subeq:one_shot_bc_par0}\\
M_1 &=& M_{1,0} N, \label{subeq:one_shot_bc_par1}\\
M_2 &=& M_{2,0} L, \label{subeq:one_shot_bc_par2}\\
\tilde{N} &=& \hat{N} N ,\label{subeq:one_shot_bc_par3}\\
\tilde{L} &=& \hat{L} L.\label{subeq:one_shot_bc_par4}
\end{IEEEeqnarray}
\end{subequations}
\end{theorem}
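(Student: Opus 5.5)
Theorem~\ref{theo:general_BC} is quoted from \cite{LiuCuffVerdu:15c} and we use it as a black box. If we were to reprove it, the plan is a one-shot random-coding argument for Marton coding with a common layer, with every error event bounded by information-spectrum tail probabilities instead of typicality. Its conclusion, in the form used later, is an upper bound on each user's average error probability by a sum of terms of the form $\Pr[i(\cdot;\cdot) \le \log(\text{codebook size})+\gamma]$, plus exponentially small terms in $\gamma$, plus a term controlling encoding (covering) failure.

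\emph{Codebook and encoder.} First draw $M_0 M_{1,0} M_{2,0}$ cloud centers $U$ i.i.d.\ $P_U$, one per value of the combined common index $(m_0,m_{1,0},m_{2,0})$. Around each cloud center, draw $\tilde N=\hat N N$ satellites $V\sim P_{V|U}$ and $\tilde L=\hat L L$ satellites $T\sim P_{T|U}$, conditionally i.i.d. The $V$-satellites are split into $N$ bins indexed by the private part of $M_1$ (so $M_1=M_{1,0}N$), and the $T$-satellites into $L$ bins indexed by the private part of $M_2$; each bin has $\hat N$ (resp.\ $\hat L$) elements.

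Given the messages, the encoder selects a pair $(v,t)$ from the product of the two addressed bins using a likelihood encoder, i.e.\ with probability proportional to $\frac{P_{VT|U}}{P_{V|U}P_{T|U}}(v,t|u)$. It then transmits $f(u,v,t)$. The randomness in this choice is absorbed into the shared key.

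\emph{Decoders and error events.} Decoder $1$ performs a threshold test on $i_{UV;Y_1}$, searching for the unique $(u,v)$ that passes; decoder $2$ does the same with $i_{UT;Y_2}$. Alternatively, each decoder can test the cloud center and the satellite separately, using $i_{U;Y_k}$ and $i_{V;Y_1|U}$; this matches the densities in \eqref{subeq:information_rv_def}. The error events are of two kinds.
\begin{itemize}
\item The true codeword fails the test. This yields the tail terms, e.g.\ $\Pr[i_{UV;Y_1}\le \log(M\tilde N)+\gamma]$.
\item A wrong codeword passes the test. This is handled by the union bound and the standard change of measure $P_{Y|W}\le 2^{-(\text{threshold})}P_Y$ on the passing set, giving terms of order $2^{-\gamma}$.
\end{itemize}

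\emph{Main obstacle.} The hard step is the encoder's covering error. Here one must compare the true joint law of $(U,V,T,X,Y_1,Y_2)$ induced by the likelihood encoder with the ideal law $P_{UVT}P_{Y_1Y_2|X}$ in total variation. The tool I would use is a one-shot soft-covering (mutual covering) lemma. It bounds this distance by $\Pr[i_{V;T|U}>\log(\hat N\hat L)-\gamma]+2^{-\gamma/2}$, together with analogous terms for covering by $V$ alone and by $T$ alone. After this comparison, all the decoding tail probabilities can be evaluated under the ideal distribution.

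\emph{Specialization.} For our degraded setting we take $T$ constant, so the $T$-layer and $\hat L$ are trivial, and set $V=X$. The result then reduces to the layered Feinstein-type bound announced in Step 1, with error terms $\Pr[i_{U;Y_2}\le\cdot]$ and $\Pr[i_{X;Y_1|U}\le\cdot]$.
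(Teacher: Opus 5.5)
The paper gives no proof of this statement. It is imported from \cite{LiuCuffVerdu:15c}, so treating it as a black box is exactly what the paper does. The reproof you sketch, however, has a genuine gap in the decoding step.

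A threshold test on $i_{UV;Y_1}$ alone, analyzed with the change of measure $P_{Y_1|UV}\le\exp(-\tau)P_{Y_1}$ on the passing set, only controls competitors from wrong clouds, because only those are independent of $Y_1$. The other satellites of the transmitted cloud share the center $u$ with the sent codeword. Conditionally on $U=u$, their joint law with the output is $P_{V|U}(v'|u)P_{Y_1|U}(y|u)$, not $P_{V|U}(v'|u)P_{Y_1}(y)$. Write $i_{UV;Y_1}(u,v';y)=i_{U;Y_1}(u;y)+i_{V;Y_1|U}(v';y|u)$ and set $\Delta=i_{U;Y_1}(u;y)-\log M$. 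A competitor passes the joint test if and only if $i_{V;Y_1|U}(v';y|u)>\log\tilde N+\gamma-\Delta$. The correct conditional change of measure then bounds the union over the $\tilde N$ satellites only by $\exp(\Delta-\gamma)$, which is useless whenever the cloud is received with a margin larger than $\gamma$. This is exactly why the displayed bound contains the conditional events $\{i_{V;Y_1|U}\le\log\tilde N+\gamma\}$ and $\{i_{T;Y_2|U}\le\log\tilde L+\gamma\}$ next to the joint ones.

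The missing idea is to impose \emph{both} tests:
\begin{itemize}
\item Wrong clouds (at most $M\tilde N$ candidates) are excluded by the joint test with threshold $\log M\tilde N+\gamma$, via $P_{Y_1|UV}\le\exp(-\log M\tilde N-\gamma)P_{Y_1}$.
\item Wrong satellites of the correct cloud (at most $\tilde N$) are excluded by the conditional test with threshold $\log\tilde N+\gamma$, via $P_{Y_1|UV}\le\exp(-\log\tilde N-\gamma)P_{Y_1|U}$. This is legitimate because such a satellite is conditionally independent of $Y_1$ given $U$ under the ideal law.
\end{itemize}
Each class contributes $\exp(-\gamma)$, which accounts for the $2\exp(-\gamma)$. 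Your fallback decoder, which tests $i_{U;Y_k}$ and $i_{V;Y_1|U}$ separately, is sound. But it proves a weaker inequality, with $\{i_{U;Y_1}\le\log M+\gamma\}$ in place of $\{i_{UV;Y_1}\le\log M\tilde N+\gamma\}$.

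Even after this repair, your covering step takes a different route from \cite{LiuCuffVerdu:15c} and would not reproduce the displayed inequality. There the encoder is deterministic: it picks from the product bin a pair satisfying an information-density condition, and is analyzed with a one-shot mutual covering lemma. That lemma is where $\exp(-\exp(\gamma))$, the $-2\gamma$ in $\{i_{V;T|U}>\log\hat N\hat L-2\gamma\}$, and the term $(\min\{\hat N,\hat L\}-1)/(\hat N\hat L(\exp(-\gamma)-\exp(-2\gamma)))$ originate. Your likelihood encoder with a total-variation comparison to $P_{UVT}P_{Y_1Y_2|X}$ is a legitimate alternative, and it makes the decoding analysis under the ideal law immediate. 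However, it needs encoder randomness (harmless in the paper's keyed setting) and produces different residual terms, so it yields a different one-shot bound rather than the stated one.

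For the paper's use ($T$ constant, $\hat N=\hat L=1$), each product bin holds a single pair, so the covering step you single out as the main obstacle is vacuous. Note also that the specialization does not follow by literal substitution into the displayed bound. There $\{i_{V;T|U}>\log\hat N\hat L-2\gamma\}=\{0>-2\gamma\}$ and $\{i_{T;Y_2|U}\le\log\tilde L+\gamma\}=\{0\le\gamma\}$ are certain events. The bound \eqref{eq:FeinsteinBC} should instead be proved directly as one-shot superposition coding, with the two-test decoder for user~1 and a single cloud test for user~2. That is what your construction gives once the decoder is fixed.
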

Then, for any $\gamma>0$ there exists an $(M_0,M_1,M_2,\epsilon_1,\epsilon_2)$ code for the BC
with
\begin{IEEEeqnarray}{rCl}
\max\{\epsilon_1,\epsilon_2\}
&\leq& 2\exp(-\gamma)+\exp^{-\exp(\gamma)}\nonumber\\
&  & +  P\left[\left\{ i_{UV;Y_1}(UV;Y_1) \leq \log M\tilde{N} + \gamma\right\}\right.\nonumber\\
& & \hspace{1cm} \cup \left\{ i_{UT;Y_2}(UT;Y_2) \leq \log M\tilde{L} + \gamma\right\} \nonumber\\
& &  \hspace{1cm}  \cup \left\{ i_{V;Y_1|U}(V;Y_1|U) \leq \log \tilde{N} + \gamma\right\} \nonumber\\
& &  \hspace{1cm}  \cup \left\{ i_{T;Y_2|U}(T;Y_2|U) \leq \log \tilde{L} + \gamma\right\} \nonumber\\
 & & \hspace{1cm}  \left. \cup \left\{ i_{V;T|U}(V;T|U) > \log \hat{N}\hat{L} - 2 \gamma\right\} \right] \nonumber\\
 & & + \frac{\min\left\{\hat{N},\hat{L}\right\} - 1}{\hat{N}\hat{L}\left(\exp(-\gamma)-\exp(-2\gamma)\right)}\IEEEeqnarraynumspace
 .\label{eq:pe_one_shot}
\end{IEEEeqnarray}
For our use, we choose the following random variables and parameters in Theorem~\ref{theo:general_BC}. 
For $V$ and $T$:
\begin{subequations}
\label{subeq:one_shot_parameters}
\begin{IEEEeqnarray}{rCl}
V & = & X;\label{subeq:one_shot_parameters_V}\\
T & = & \emptyset;\label{subeq:one_shot_parameters_T}
\end{IEEEeqnarray}
and for $M_0$, $M_{1,0}$, $L$, $\hat{N}$ and $\hat{L}$:
\begin{equation}
M_0 = M_{1,0} = L =\hat{N} = \hat{L} = 1.\label{subeq:one_shot_parameters_1}
\end{equation} 
With~(\ref{subeq:one_shot_parameters_1}) we obtain
\begin{IEEEeqnarray}{rCl}
M_1 &=& N;\label{subeq:one_shot_parameters_M1}\\
M_2 & = & M_{2,0};\label{subeq:one_shot_parameters_M_2}\\
M &=& M_2;\label{subeq:one_shot_parameters_M}\\
\tilde{N} &=& M_1;\label{subeq:one_shot_parameters_tildeN}\\
\tilde{L} &=& 1.\label{subeq:one_shot_parameters_tildeL}
\end{IEEEeqnarray}
\end{subequations}
Substituting~(\ref{subeq:one_shot_parameters}) in~(\ref{eq:pe_one_shot}) and using the union bound,
we conclude that there exists a $(1,M_1,M_2,\epsilon,\epsilon)$ code for the BC with
\begin{IEEEeqnarray}{rCl}
\epsilon &\leq& 2\exp(-\gamma)+\exp^{-\exp(\gamma)}\nonumber\\ 
& & + P\left[i_{X;Y_1}(X;Y_1) \leq \log M_1M_2 + \gamma\right] \nonumber\\
              & &  +  P\left[i_{U;Y_2}(U;Y_2) \leq \log M_2 + \gamma\right] \nonumber\\
              & & + P\left[i_{X;Y_1|U}(X;Y_1|U) \leq \log M_1 + \gamma\right] \label{eq:FeinsteinBC}
\end{IEEEeqnarray}
where $P_X$ is induced by $P_{U,V,T}$ and the mapping $f$. Note that here $X$ is not a deterministic
function of $U$, due to~(\ref{subeq:one_shot_parameters_V}). Hence $P_{U,X}$ is a general 
joint distribution.

We now pass to fixed alphabets and transmission length $n$. In~(\ref{eq:FeinsteinBC}),
$\tgamma$ is arbitrary. Thus choose an arbitrary $\tgamma>0$ and set
\begin{equation}
\sqrt{n}\tgamma = \gamma. \label{eq:gamma_tgamma}
\end{equation}
Using~(\ref{eq:FeinsteinBC}) and the notation of Definition~\ref{def:code},
we conclude that for every $P_{U^n,X^n}$ on $\alphabetU^n\times\alphabetX^n$ 
such that $P_{X^n}$ satisfies~(\ref{eq:def_divergence}),
and any $\tgamma>0$, there exists an~$(n,\nu_1,\nu_2,\epsilon,\delta)$ covert code for the BC
with
\begin{IEEEeqnarray}{rCl}
\epsilon &\leq& 2\exp(-\sqrt{n}\tgamma)+\exp^{-\exp(\sqrt{n}\tgamma)}\nonumber\\ 
& & + P\left[\sone i_{X^n;Y_1^n}(X^n;Y_1^n) \leq \sone\log \nu_1 \nu_2 + \tgamma\right] \nonumber\\
              & &    + P\left[\sone i_{U^n;Y_2^n}(U^n;Y_2^n) \leq \sone \log \nu_2 + \tgamma\right] \nonumber\\
              & & + P\left[\sone i_{X^n;Y_1^n|U}(X^n;Y_1^n|U^n) \leq \sone \log \nu_1 + \tgamma\right]. \IEEEeqnarraynumspace \label{eq:FeinsteinBC_n}
\end{IEEEeqnarray}
This completes Step~1.

{\bf Step 2.}
Let $P_{U,X}$ be any joint distribution satisfying the covertness constraint \eqref{eq:div_constraint_limit}, which implies: 
\begin{equation}
\limn P_X(x_0) = 1. \label{eq:limit_px0}
\end{equation}
 Then, let $P_{U^n,X^n}$ be the $n$-fold product of $P_{U,X}$:
\begin{equation}
P_{U^n,X^n}(u^n,x^n) = P_{U,X}^{\times\!n}(u^n,x^n) = \prod_{i=1}^n P_{U,X}(u_i,x_i).
\label{eq:pux_n-prod}
\end{equation}

We show next that the random variables in~(\ref{eq:FeinsteinBC_n}) converge in probability
to the corresponding single letter information functions, i.e.,
\begin{subequations}
\label{subeq:i_convergence}
\begin{IEEEeqnarray}{rCl}
 \sone i_{X^n;Y_1^n}(X^n;Y_1^n) &\longrightarrow& \sqrt{n}I(X;Y_1) \quad \mbox{in prob.}
                \label{subeq:i_convergence_XY1}\\
 \sone i_{U^n;Y_2^n}(U^n;Y_2^n) &\longrightarrow&   \sqrt{n}I(U;Y_2) \quad  \mbox{in prob.}\IEEEeqnarraynumspace
                \label{subeq:i_convergence_UY2}\\
 \sone i_{X^n;Y_1^n|U^n}(X^n;Y_1^n|U^n) &\longrightarrow& \sqrt{n}I(X;Y_1|U) \quad \mbox{in prob.} \nonumber \\
                \label{subeq:i_convergence_XY1gU}
\end{IEEEeqnarray}
\end{subequations}
The proof of~(\ref{subeq:i_convergence_XY1}) follows exactly the lines 
of the proof of~\cite[eq.~(16)]{WangWornellZheng:16p} using~(\ref{eq:limit_px0}) and is omitted.
The proof of~(\ref{subeq:i_convergence_UY2}) follows these lines as well, using~(\ref{subeq:decay_U}).
We give it here for completeness. First, note that
\begin{IEEEeqnarray}{rCl}
\mathsf{E} \sone i_{U^n;Y_2^n}(U^n;Y_2^n) &=& 
\mathsf{E}\sone \log\frac{\timesn{P}_{Y_2|U}(Y_2^n|U^n)}{P_{Y_2}^{\times\!n}(Y_2^n)}\\
&=& \sqrt{n} I(U;Y_2). \label{eq:i_UY2_conv1}
\end{IEEEeqnarray}
Hence by Chebyshev's inequality 
\begin{IEEEeqnarray}{rCl}
\IEEEeqnarraymulticol{3}{l}{
P\left[ \left| \sone i_{U^n;Y_2^n}(U^n;Y_2^n) - \sqrt{n} I(U;Y_2)\right| \geq \alpha\right]}\nonumber\\* \quad
  &\leq& \frac{1}{\alpha^2} \mathsf{var}\left(\sone \log \frac{\timesn{P}_{Y_2|U}(Y_2^n|U^n)}{P_{Y_2}^{\times\!n}(Y_2^n)}\right),
  \label{eq:i_UY2_conv2}
\end{IEEEeqnarray}
so to prove~(\ref{subeq:i_convergence_UY2}) it is enough to show that
\begin{equation}
\limn \mathsf{var}\left(\sone \log\frac{\timesn{P}_{Y_2|U}(Y^n|U^n)}{P_Y^{\times\!n}(Y^n)}\right) = 0.
 \label{eq:i_UY2_conv3}
\end{equation}
Indeed
\begin{IEEEeqnarray}{rCl}
\lefteqn{\mathsf{var}\left(\sone \log\frac{\timesn{P}_{Y_2|U}(Y^n|U^n)}{P_Y^{\times\!n}(Y^n)}\right) } \quad 
\nonumber\\   &=& \one\sum_{i=1}^n \mathsf{var}\left( \log\frac{\timesn{P}_{Y_2|U}(Y_{2,i}|U_i)}{P_{Y_2}(Y_{2,i})}\right)
 \\
    &=& \mathsf{var}\left(\log\frac{P_{Y_2|U}(Y_2|U)}{P_{Y_2}(Y_2)} \right) \\
    &\leq& \mathsf{E}\left[\left( \log\frac{P_{Y_2|U}(Y_2|U)}{P_{Y_2}(Y_2)} \right)^{\!2}\ \!\right]\\
    &=& \sum_{u\in A}P_U(u)\sum_{y_2} P_{Y_2|U}(y_2|u)\left( \log\frac{P_{Y_2|U}(y_2|u)}{P_{Y_2}(y_2)} \right)^{\!2}
    \nonumber\\
    & & +  \sum_{u\in B}P_U(u)\sum_{y_2} P_{Y_2|U}(y_2|u)\left( \log\frac{P_{Y_2|U}(y_2|u)}{P_{Y_2}(y_2)} \right)^{\!2},
     \label{eq:i_UY2_conv4}\IEEEeqnarraynumspace
\end{IEEEeqnarray}
where we again partition $\mathcal{U}$ into the subsets $A$ and $B$ so that for  $u\in B$ we have $\limn P_U(u) = 0$ while for $u\in A$ we have $\limn P_U(u) >0$.

We now invoke again the arguments in the proof of Proposition~\ref{prop:decay_U}. The probability
of any $u'\in A$ is bounded from below, hence~(\ref{eq:uprime_condition2}), (\ref{eq:y2u_1}) and~(\ref{eq:y2u_2})
hold. Thus
\begin{equation}
\limn \left(\log\frac{P_{Y_2|U}(y_2|u)}{P_{Y_2}(y_2)}\right)^2 = (\log 1)^2 =0 \quad \forall u\in A,\label{eq:i_UY2_conv5}
\end{equation}
and the first sum in the r.h.s of~(\ref{eq:i_UY2_conv4}) vanishes as $n\rightarrow\infty$. Regarding the second
sum in the r.h.s of~(\ref{eq:i_UY2_conv4}), 
we use~(\ref{eq:y2u_2}) to write
\begin{IEEEeqnarray}{rCl}
\IEEEeqnarraymulticol{3}{l}{
\limn \sum_{u\in B}P_U(u)\sum_{y_2} P_{Y_2|U}(y_2|u)\left( \log\frac{P_{Y_2|U}(y_2|u)}{P_{Y_2}(y_2)} \right)^{\!2} }
\nonumber\\* \quad
    &=&
           \limn \sum_{u\in B}P_U(u)\sum_{y_2} P_{Y_2|U}(y_2|u)\left( \log\frac{P_{Y_2|U}(y_2|u)}{P_{Y_2|X}(y_2|x_0)} \right)^{\!2}. \nonumber\\
           \label{eq:i_UY2_conv6}
\end{IEEEeqnarray}
Since every $u$ is mapped randomly to  the elements of $\alphabetX$, we have by Part~c of Conditions~\ref{cond:Q0}
\begin{equation}
\frac{P_{Y_2|U}(y_2|u)}{P_{Y_2|X}(y_2|x_0)}  \leq \max_{y_2,x} \frac{P_{Y_2|X}(y_2|x)}{P_{Y_2|X}(y_2|x_0)} \leq M
\label{eq:i_UY2_conv7}
\end{equation}
for some finite, fixed $M$. Therefore by~(\ref{eq:i_UY2_conv6}), (\ref{eq:i_UY2_conv7}) 
\begin{IEEEeqnarray}{rCl}
\IEEEeqnarraymulticol{3}{l}{
\limn \sum_{u\in B}P_U(u)\sum_{y_2} P_{Y_2|U}(y_2|u)\left( \log\frac{P_{Y_2|U}(y_2|u)}{P_{Y_2}(y_2)} \right)^{\!2} }
\nonumber\\* \quad
    &\leq& \limn P_U(B) \log M = 0. \label{eq:i_UY2_conv8}
\end{IEEEeqnarray}
Using~(\ref{eq:i_UY2_conv5}) and (\ref{eq:i_UY2_conv8}) in~(\ref{eq:i_UY2_conv4}) yields~(\ref{eq:i_UY2_conv3}).
This establishes~(\ref{subeq:i_convergence_UY2}).

We proceed to prove~(\ref{subeq:i_convergence_XY1gU}). First, observe that following the lines of
the proof of~(\ref{subeq:i_convergence_UY2}), we also have
\begin{equation}
\sone i_{U^n;Y_1^n}(U^n;Y_1^n) \longrightarrow   \sqrt{n}I(U;Y_1)  \quad \mbox{in prob.},
                \label{subeq:i_convergence_UY1}\
\end{equation}
and, by previous results~\cite{WangWornellZheng:16p}
\begin{equation}
\sone i_{X^n;Y_1^n}(X^n;Y_1^n) \longrightarrow   \sqrt{n}I(X;Y_1)  \quad \mbox{in prob.}.
                \label{subeq:i_convergence_XY1_2}\
\end{equation}
Moreover
\begin{IEEEeqnarray}{rCl}
\IEEEeqnarraymulticol{3}{l}{
\sone i_{X^n;Y_1^n|U^n}(X^n;Y_1^n|U^n) } \nonumber\\* \quad
&=& \sone i_{X^n;Y_1^n}(X^n;Y_1^n) - \sone i_{U^n;Y_1^n}(U^n;Y_1^n)
\label{eq:i_XY1gU}
\end{IEEEeqnarray}
Then by properties of convergence in probability
\begin{IEEEeqnarray}{rCl}
\sone i_{X^n;Y_1^n|U^n}(X^n;Y_1^n|U^n) &\longrightarrow& \sqrt{n}I(X;Y_1) - \sqrt{n}I(U;Y_1)\nonumber\\
  &=& I(X;Y_1|U), \label{eq:i_convergence_XY1gU_2}
\end{IEEEeqnarray}
proving~(\ref{subeq:i_convergence_XY1gU}).

We have shown that for any $P_{UX}$ with marginal $P_{X}$ satisfying~(\ref{eq:limit_px0}), the random variables in~(\ref{eq:FeinsteinBC_n}) 
converge in probability
to the corresponding mutual information functions.
This implies that for any $\tgamma>0$ and any pairs~$\nu_1,\nu_2$ such that
\begin{subequations}
\label{subeq:f_L}
\begin{IEEEeqnarray}{rCl}
\frac{\log\nu_1}{\sqrt{n\delta}} &\leq& \sqrt{n/\delta} I(X;Y_1|U) - \tgamma/\delta \label{subeq:f_L1}\\
\frac{\log\nu_2}{\sqrt{n\delta}} &\leq& \sqrt{n/\delta} I(U;Y_2) - \tgamma/\delta \label{subeq:f_L2}\\
\frac{\log\nu_1\nu_2}{\sqrt{n\delta}} &\leq& \sqrt{n/\delta} I(X;Y_1) - \tgamma/\delta. 
\label{subeq:f_L1L2}
\end{IEEEeqnarray}
\end{subequations}
for some sequence $P_{U,X}$ satisfying the divergence constraint~(\ref{eq:div_constraint_limit}), there exists an~$(n,\nu_1,\nu_2,\epsilon,\delta)$ covert code for the BC.
The channel is degraded thus~(\ref{subeq:f_L1}),~(\ref{subeq:f_L2}) dominate~(\ref{subeq:f_L1L2}).
Since $\tgamma$ is arbitrary, this establishes the direct part.
\QEDB

\section{Proof of the Computable Region}
\label{sec:computable_proof}
We employ a Taylor expansions of $I(U;Y_2)$, $I(X;Y_1|U)$
and $D(P_Z||Q_0)$ near $\mu_1=0$, $\mu_2=0$. By~(\ref{subeq:structure_UX}), (\ref{subeq:structure_UX_2})
and~(\ref{subeq:structure_UX_3}) we have for $Y_k$, $k=1,2$ and $Z$:
\begin{subequations}
\label{subeq:joint_ditributions}
\begin{IEEEeqnarray}{rCl}
P_{X,Y_k}(x,y_k) &=& P_{Y_k|X}(y_k|x_0)\olsi{\mu}_1\olsi{\mu}_2\onei_{x_0}(x)  \nonumber \\
&& + P_{Y_k|X}(y_k|x)
\bigg[\olsi{\mu}_1\mu_2 \tilde{P}_X^A(x) + \mu_1 P_X^B(x)\bigg],\nonumber \\\label{subeq:joint_distributions_x_y}\\
P_{U,Y_k}(u,y_k) &=& \left\{ \,
                                       \begin{IEEEeqnarraybox}[] [c] {l?s}
                                       \IEEEstrut
                                       \olsi{\mu}_1\bigg[\olsi{\mu}_2 P_{Y_k|X}(y_k|x_0)  \nonumber \\ \hspace{.6cm} +
                                   \mu_2\tilde{P}_{Y_k|U}^A(y_k|u)\bigg]P_U^A(u) & for $u\in A$, \\
                                  \mu_1 P_{Y_k|U}^B(y_k|u) P_U^B(u) & for $u\in B.$
                                  \IEEEstrut
                                   \end{IEEEeqnarraybox}
                                   \right.\nonumber\\
                                   \label{subeq:joint_distributions_u_y} \\
 P_{Y_k}(y_k) &=& \olsi{\mu}_1\olsi{\mu}_2 P_{Y_k|X}(y_k|x_0) \nonumber\\
 && +\olsi{\mu}_1\mu_2\tilde{P}_{Y_k}^A(y_k)
                           + \mu_1 P_{Y_k}^B(y_k) \label{subeq:joint_distributions_yk}\\
P_Z(z) &=& \olsi{\mu}_1\olsi{\mu}_2 P_{Z|X}(z|x_0) \nonumber\\
&&+\sum_x P_{Z|X}(z|x)\left[\olsi{\mu}_1\mu_2 \tilde{P}_X^A(x)
                                    + \mu_1 P_X^B(x)\right]\nonumber\\
              &=& \olsi{\mu}_1\olsi{\mu}_2 P_{Z|X}(z|x_0)  +      \olsi{\mu}_1\mu_2 \tilde{P}_Z^A(z)
                                    + \mu_1 P_Z^B(z)               \nonumber\\
                                     \label{subeq:joint_distributions_z}                                   
\end{IEEEeqnarray}
where in~(\ref{subeq:joint_distributions_yk}) we used the definitions:
\begin{IEEEeqnarray}{rCl}
\tilde{P}_{Y_k}^A(y_k) &=& \sum_{u\in A} \tilde{P}_{Y_k|U}^A(y_k|u) P_U^A(u) \label{subeq:p_yk_A}\\
P_{Y_k}^B(y_k) &=& \sum_{u\in B} P_{Y_k|U}^B(y_k|u) P_U^B(u) \label{subeq:p_yk_B}
\end{IEEEeqnarray}
\end{subequations}
Note that
\begin{subequations}
\label{subeq:distribs_limit}
\begin{IEEEeqnarray}{rCl}
P_{X,Y_k}(x,y_k)\Bigr\rvert_{\substack{\mu_1 = 0\\ \mu_2=0}} &=& P_{Y_k,X}(y_k,x_0)\onei_{x_0}(x)
                                 \label{subeq:distribs_limit_XYk}\\
P_{U,Y_k}(u,y_k)\Bigr\rvert_{\substack{\mu_1 = 0\\ \mu_2=0}} &=& \left\{ \,
                                       \begin{IEEEeqnarraybox}[] [c] {l?s}
                                       \IEEEstrut
                                        P_{Y_k|X}(y_k|x_0) P_U^A(u)  & for $u\in A$, \\
                                  0 & for $u\in B.$
                                  \IEEEstrut
                                   \end{IEEEeqnarraybox}
                                   \right.\nonumber \\
                                   \label{subeq:distribs_limit_UYk} \\
P_{Y_k}(y_k)\Bigr\rvert_{\substack{\mu_1 = 0\\ \mu_2=0}}  &=& P_{Y_k|X}(y_k|x_0)\label{subeq:distribs_limit_Yk}\\
P_U(u)\Bigr\rvert_{\substack{\mu_1 = 0\\ \mu_2=0}}  &=& \left\{ \,
                                       \begin{IEEEeqnarraybox}[] [c] {l?s}
                                       \IEEEstrut
                                        P_U^A(u)  & for $u\in A$, \\
                                  0 & for $u\in B.$
                                  \IEEEstrut
                                   \end{IEEEeqnarraybox}
                                   \right.
                                   \label{subeq:distribs_limit_U}  \\
P_Z(z)\Bigr\rvert_{\substack{\mu_1 = 0\\ \mu_2=0}} &=& Q_0(z)\label{subeq:distribs_limit_Z}                                
\end{IEEEeqnarray}
\end{subequations}
We obtain the following derivatives of $I(U;Y_k)$:
\begin{subequations}
\label{subeq:d_I}
\begin{IEEEeqnarray}{rCl}
 \pdone{\mu_1} I(U;Y_k)\Bigr\rvert_{\substack{\mu_1 = 0\\ \mu_2=0}} 
 &=& \sum_{u\in B} P_U^B(u)D\left(P_{Y_k|U}^B(\cdot|u)||P_{Y_k|X}(\cdot|x_0)\right)\nonumber \\ \label{subeq:d_I_UY_mu1} \\
 \pdone{\mu_2} I(U;Y_k)\Bigr\rvert_{\substack{\mu_1 = 0\\ \mu_2=0}} &=& 0\label{subeq:d_I_UY_mu2}\\
  \pdone{\mu_1} I(X;Y_k)\Bigr\rvert_{\substack{\mu_1 = 0\\ \mu_2=0}} 
 &=& \sum_{x} P_X^B(x)D\left(P_{Y_k|X}(\cdot|x)||P_{Y_k|X}(\cdot|x_0)\right)\nonumber\\ \label{subeq:d_I_XY_mu1}\\
  \pdone{\mu_2} I(X;Y_k)\Bigr\rvert_{\substack{\mu_1 = 0\\ \mu_2=0}} 
 &=& \sum_{x} \tilde{P}_X^A(x)D\left(P_{Y_k|X}(\cdot|x)||P_{Y_k|X}(\cdot|x_0)\right).\nonumber\\ \label{subeq:d_I_XY_mu2}
\end{IEEEeqnarray}
\end{subequations}
Similarly, evaluating the derivatives of $D(P_Z||Q_0)$ w.r.t. $\mu_1$ and $\mu_2$, we obtain
\begin{equation}
\pdone{\mu_1} D(P_Z||Q_0)\Bigr\rvert_{\substack{\mu_1 = 0\\ \mu_2=0}} 
 = \pdone{\mu_2} D(P_Z||Q_0)\Bigr\rvert_{\substack{\mu_1 = 0\\ \mu_2=0}}  =0\label{eq:D_derivatives_1}
\end{equation}
and the Hessian
\begin{subequations}
\label{subeq:D_derivatives_2}
\begin{IEEEeqnarray}{rCl}
\pdtwo{\mu_1} D(P_Z||Q_0)\Bigr\rvert_{\substack{\mu_1 = 0\\ \mu_2=0}} 
                     &=&  \chi_2(P_Z^B||Q_0)\label{subeq:D_derivatives_2_mu1}\\
 \pdtwo{\mu_2} D(P_Z||Q_0)\Bigr\rvert_{\substack{\mu_1 = 0\\ \mu_2=0}} 
                     &=&  \chi_2(\tilde{P}_Z^A||Q_0)\label{subeq:D_derivatives_2_mu2}\\   
\pdcross{\mu_1}{\mu_2} D(P_Z||Q_0)\Bigr\rvert_{\substack{\mu_1 = 0\\ \mu_2=0}} 
                     &=&  \chi_2(\tilde{P}_Z^A,P_Z^B||Q_0),\label{subeq:D_derivatives_2_mu1_mu2}                                 
\end{IEEEeqnarray}
\end{subequations}
where 
\begin{equation} 
\chi_2(\tilde{P}_Z^A,P_Z^B||Q_0):=\sum_{z \in \mathcal{Z}} \frac{(\tilde{P}_Z^A(z) - Q_0(z))(P_Z^B(z)-Q_0(z))}{Q_0(z)}.
\end{equation} 

The derivations of~(\ref{subeq:d_I}), (\ref{eq:D_derivatives_1}) and~(\ref{subeq:D_derivatives_2}) are given
in Appendix~\ref{subappendix:proof_der_I_D_2}.
Thus we can write
\begin{subequations}
\label{subeq:computable_all}
\begin{IEEEeqnarray}{rCl}
I(X;Y_1|U) &=& I(X;Y_1) - I(U;Y_1)\nonumber\\
                  &=&  \mu_1 \sum_{x} P_X^B(x)D\left(P_{Y_1|X}^B(\cdot|x)||P_{Y_1|X}(\cdot|x_0)\right)\nonumber\\
                    & &    + \mu_2 \sum_{x} \tilde{P}_X^A(x)D\left(P_{Y_1|X}(\cdot|x)||P_{Y_1|X}(\cdot|x_0)\right)\nonumber\\
                    & & - \mu_1 \sum_{u\in B} P_U^B(u)D\left(P_{Y_1|U}^B(\cdot|u)||P_{Y_1|X}(\cdot|x_0)\right)\nonumber\\
                    & & + o(\mu_1,\mu_2)
                             \label{subeq:computable_all_1}\\
I(U;Y_2) &=& \mu_1 \sum_{u\in B} P_U^B(u)D\left(P_{Y_2|U}^B(\cdot|u)||P_{Y_2|X}(\cdot|x_0)\right)\nonumber\\
               & &  + o(\mu_1,\mu_2)
                             \label{subeq:computable_all_2}
\end{IEEEeqnarray}
\end{subequations}
where the divergence constraint (\ref{eq:div_constraint_limit}) entails
\begin{IEEEeqnarray}{rCl}
\lefteqn{\half\Big[\mu_1^2 \chi_2(P_Z^B||Q_0) + \mu_1\mu_2\chi_2(\tilde{P}_Z^A,P_Z^B||Q_0)} 
\nonumber \\
&& \hspace{2cm}  +\mu_2^2 \chi_2(\tilde{P}_Z^A||Q_0)\Big] + o(\mu_1^2,\mu_2^2) \leq \frac{\delta}{n} \label{eq:computable_all_3}
\end{IEEEeqnarray}
Using the normalization~(\ref{subeq:normalized_parameters}) 
in~(\ref{subeq:computable_all}) and~(\ref{eq:computable_all_3}) results in~(\ref{subeq:computable_region}).
\QEDB

\section{Inclusion of Time-Sharing Region ${\set L}^{(\textnormal{TS})} \subseteq {\set L}^{(I)}$ }\label{app:Ts}

Let $P_X^{1*}$ bet the $L_1^*$-achieving pmf and $P_X^{2*}$   the $L_2^*$-achieving pmf.  
(Both are pmfs over $\mathcal{X}\backslash \{x_0\}$). Let further $P_Z^{1*}$ and $P_Z^{2*}$   
be the corresponding output distributions at the warden. 

For any $\mu\in[0,1]$, 
specializing \eqref{subeq:computable_region_L1C}  and \eqref{subeq:computable_region_L2C}  to the choices $\tilde{P}_X^A=P_X^{1*}$ and $P_X^B=P_X^{2*}$ (so each $P_X^{\ell*}$ is only a pmf over $\mathcal{X}\backslash \{x_0\}$),   and choosing a deterministic mapping for $P_{U|X}^B$ results in the rate-pair 
\begin{IEEEeqnarray}{rCl}
L_1 &\leq&(1- \nu) \frac{\sqrt 2  \sum_x {P}_X^{1*}(x) D\left(P_{Y_1|X}(\cdot|x)||P_{Y_1|X}(\cdot|x_0) \right) }{  \sqrt{ \chi_{2} \left( (1-\nu) {P}_Z^{1*} +\nu {P}_Z^{2*} \big\| Q_0  \right) }}  \IEEEeqnarraynumspace
      \\ &  = & \alpha_1(\nu)L_1^*  \\[1.2ex]
L_2 &\leq& \nu\frac{ \sqrt{2}   \sum_x {P}_X^{2*}(x) D\left(P_{Y_2|X}(\cdot|x)||P_{Y_2|X}(\cdot|x_0) \right)}{\sqrt{ \chi_{2} \left( (1-\nu) {P}_Z^{1*} +\nu {P}_Z^{2*} \big\| Q_0  \right) }}    \\ & = &  {\alpha}_2(\nu)L_2^*
\end{IEEEeqnarray}
for 
\begin{IEEEeqnarray}{rCl}
 \alpha_1(\nu)& := & \frac{  (1-\nu) \sqrt{ \chi_{2} \left(  \tilde{P}_Z^{1*} \big\| Q_0  \right) }}{ \sqrt{ \chi_{2} \left( (1-\nu) \tilde{P}_Z^{1*} +\nu {P}_Z^{2*} \big\| Q_0  \right) }}\\
  \alpha_2(\nu)& := & \frac{  \nu \sqrt{ \chi_{2} \left(  \tilde{P}_Z^{2*} \big\| Q_0  \right) }}{ \sqrt{ \chi_{2} \left( (1-\nu) \tilde{P}_Z^{1*} +\nu {P}_Z^{2*} \big\| Q_0  \right) }}.
\end{IEEEeqnarray}
Varying $\nu$ from $0$ to $1$ varies $\alpha_1$ from $1$ to 0 and $\alpha_2$ from 0 to $1$. To show that this region includes the time-sharing region it suffices to show that for any $\nu \in[0,1]$:
\begin{equation} 
 \alpha_1(\nu) +  \alpha_2(\nu) \geq 1,
\end{equation} 
which holds because  $\alpha_1(\nu),   \alpha_2(\nu)  >0$ and because by the convexity of the square-root of the $\chi_{2}$-distance we have $(\alpha_1(\nu) +  \alpha_2(\nu))^2 \geq 1$, as proved by the sequence of Inequalities~\eqref{eq:alphas}--\eqref{eq:larger1} on top of the next page,
\begin{figure*}[h!]
\begin{IEEEeqnarray}{rCl}
(\alpha_1(\nu) +  \alpha_2(\nu))^2
& = &  \frac{  (1- \nu)^2 \chi_{2} \left(  \tilde{P}_Z^{1*} \big\| Q_0  \right) 
+ \nu^2  \chi_{2} \left(  \tilde{P}_Z^{2*} \big\| Q_0  \right)  + 2\nu (1-\nu) \sqrt{ \chi_{2} \left(  \tilde{P}_Z^{1*} \big\| Q_0  \right) }
     \sqrt{ \chi_{2} \left(  \tilde{P}_Z^{2*} \big\| Q_0  \right) } }{\chi_{2} \left( (1-\nu) \tilde{P}_Z^{1*} +\nu {P}_Z^{2*} \big\| Q_0  \right)} \label{eq:alphas}\\
&=& \frac{  (1- \nu)^2 \chi_{2} \left(  \tilde{P}_Z^{1*} \big\| Q_0  \right) + \nu^2  \chi_{2} \left(  \tilde{P}_Z^{2*} \big\| Q_0  \right)  
+ 2\nu (1-\nu) \sqrt{ \chi_{2} \left(  \tilde{P}_Z^{1*} \big\| Q_0  \right) }\sqrt{ \chi_{2} \left(  \tilde{P}_Z^{2*} \big\| Q_0  \right) } }{(1-\nu)^2\chi_{2} \left( \tilde{P}_Z^{1*}  
\big\| Q_0  \right) + \nu^2 \chi_{2} \left( \tilde{P}_Z^{2*}  \big\| Q_0  \right) 
+ 2 \nu (1-\nu) \sum_z \frac{(P_Z^{1*}(z)- Q_0(z))( P_Z^{2*}(z)- Q_0(z))   }{Q_0(z)}}\\
&\geq & 1 ,\label{eq:larger1}
\end{IEEEeqnarray}
\hrule 
\end{figure*}
where the last inequality holds because by Cauchy-Schwarz-Inequality: 
\begin{IEEEeqnarray}{rCl}
\lefteqn{\sqrt{ \sum_{z} \left( \frac{P_Z^{1*}(z)- Q_0(z)}{\sqrt{Q_0(z)}}  \right)^2 } 
\sqrt{ \sum_{z} \left( \frac{P_Z^{2*}(z)- Q_0(z)}{\sqrt{Q_0(z)}}  \right)^2 }}  \qquad  \nonumber\\
&
\geq& \sum_{z}  \frac{(P_Z^{1*}(z)- Q_0(z))}{\sqrt{Q_0(z)}}   \frac{(P_Z^{2*}(z)- Q_0(z))}{\sqrt{Q_0(z)}}. \hspace{1.6cm}
\end{IEEEeqnarray}

\section{Optimality of Time-sharing} \label{app:TS_opt}
For any set of achievable $(L_1, L_2)\in \tilde{\mathcal{L}}^{(I)}$, the  set of inequalities \eqref{eq:seq1}--\eqref{eq:last} on top of the next page holds for  some $B\subset\alphabetU$, $P_U^A$, $P_U^B$, $\tilde{P}_{X|U}$, $P_{X|U}$ and $\nu\in[0,1]$.
\begin{figure*}[h!]
\begin{IEEEeqnarray}{rCl}
\frac{L_1}{L_1^*} +\frac{ L_2}{L_2^*} 
  &\leq&\sqrt{\frac{2}{ \chi_{2}(\nu) }}   \left[ \nu  \sum_{u \in B}  P_U^B(u) \frac{I(Y_1; X^B| U=u) }{L_1^*}
         +(1-\nu) \sum_x \tilde{P}_X^A(x)\frac{ D\left(P_{Y_1|X}(\cdot|x)||P_{Y_1|X}(\cdot|x_0) \right) }{L_1^*}
        \right]  \nonumber \\
&& +  \sqrt{\frac{2}{ \chi_{2}(\nu) }}  \left[ \nu  \sum_x {P}_X^B(x) \frac{D\left(P_{Y_2|X}(\cdot|x)||P_{Y_2|X}(\cdot|x_0) \right) }{L_2^*} -    \nu \sum_{u\in B} P_U^B(u) \frac{I(Y_2;X^B|U=u) }{L_2^*} \right] \label{eq:seq1}\\
  &=&\sqrt{\frac{2}{ \chi_{2}(\nu) }}   \left[ \nu  \sum_{u \in B}  P_U^B(u) \frac{I(Y_2; X^B| U=u) }{L_2^*}
         +(1-\nu) \sum_x \tilde{P}_X^A(x)\frac{ D\left(P_{Y_2|X}(\cdot|x)||P_{Y_2|X}(\cdot|x_0) \right) }{L_2^*}
        \right]  \nonumber \\
&& +  \sqrt{\frac{2}{ \chi_{2}(\nu) }}  \left[ \nu  \sum_x {P}_X^B(x) \frac{D\left(P_{Y_2|X}(\cdot|x)||P_{Y_2|X}(\cdot|x_0) \right) }{L_2^*} -    \nu \sum_{u\in B} P_U^B(u) \frac{I(Y_2;X^B|U=u) }{L_2^*} \right]\\
          & = & (L_2^*)^{-1}  \sqrt{\frac{2}{ \chi_{2}(\nu) }} \bigg( \nu    \sum_x {P}_X^B(x) D\left(P_{Y_2|X}(\cdot|x)||P_{Y_2|X}(\cdot|x_0) \right)  \nonumber \\
                  && \hspace{4cm}+ (1-\nu)  \sum_x \tilde{P}_X^A(x) D\left(P_{Y_2|X}(\cdot|x)||P_{Y_2|X}(\cdot|x_0) \right) \bigg) \label{eq:second_last} \\
                  & \leq & 1,\label{eq:last}
\end{IEEEeqnarray}
\hrule
\end{figure*}
where in the  first  equality we applied \eqref{eq:ineq},  and in the second inequality we used the fact that the rate in \eqref{eq:second_last} corresponds to the  rate to User 2 achieved by a  time-sharing scheme employing  pmf $\tilde{P}_X^A$ during $(1-\nu)$-fraction of the time and pmf ${P}_X^B$ during the remaining time, which cannot exceed $L_2^*$.

This establishes optimality of  time-sharing as proved in \cite{TanLee:19p}.

\section{Derivation of Mutual Informations and Divergence Derivatives} 
\subsection{Proof of~(\ref{eq:der_1_I})}
\label{appendix:proof_der1_I}
We have to show  that
\begin{equation}
\sum_{u,y_2} P_{U,Y_2}(u,y_2)\pdone{\mu} \log\frac{P_{U,Y_2}(u,y_2)}{P_U(u)P_{Y_2}(y_2)} = 0.\label{eq:der_1_I_1}
\end{equation}
Let $P_{U|X}$ be a general conditional distribution. By~(\ref{eq:Phat_X})
\begin{subequations}
\label{subeq:general_distributions}
\begin{IEEEeqnarray}{rCl}
P_{U,X}(u,x) &=& (1-\mu)P_{U|X}(u|x_0) \onei_{x_0}(x)  \nonumber\\
&& +\mu P_{U|X}(u|x)\tilde{P}_X(x)  \nonumber\\\label{subeq:pux_general}\\
P_{U,Y_2}(u,y_2) &=& (1-\mu)P_{U|X}(u|x_0)P_{Y_2|X}(y_2|x_0) \nonumber\\
&&+ \mu\sum_xP_{U|X}(u|x)\tilde{P}_X(x)P_{Y_2|X}(y_2|x) \IEEEeqnarraynumspace
   \label{subeq:puy_general}\\
   P_U(u) &=& (1-\mu) P_{U|X}(u|x_0) +\mu\sum_xP_{U|X}(u|x)\tilde{P}_X(x) \label{subeq:pu_general} \nonumber\\\\
   P_{Y_2}(y_2) &=& (1-\mu) P_{Y_2|X}(y|x_0)  \nonumber\\
   &&+\mu\sum_x\tilde{P}_X(x)P_{Y_2|X}(y|x)\label{subeq:py_general}
\end{IEEEeqnarray}
\end{subequations}
Write
\begin{IEEEeqnarray}{rCl}
\IEEEeqnarraymulticol{3}{l}{
\sum_{u,y_2} P_{U,Y_2}(u,y_2)\pdone{\mu} \log\frac{P_{U,Y_2}(u,y_2)}{P_U(u)P_{Y_2}(y_2)}}  \nonumber\\* \quad 
&= &  \sum_{u,y_2} P_{U,Y_2}(u,y_2)\left[ \frac{\pdone{\mu} P_{U,Y_2}(u,y_2)}{P_{U,Y_2}(u,y_2) } 
 - \frac{\pdone{\mu} P_{U}(u)}{P_{U}(u) } \right.\nonumber\\
 && \left. \hspace{4.8cm} - \frac{\pdone{\mu} P_{Y_2}(y_2)}{P_{Y_2}(y_2) } \right]\IEEEeqnarraynumspace\\
 &=& \pdone{\mu}\sum_{u,y_2} P_{U,Y_2}(u,y_2) - \sum_{u,y_2} P_{Y_2|U}(y_2|u) \pdone{\mu}P_U(u)
        \nonumber\\
 &&    - \sum_{u,y_2} P_{U|Y_2}(u|y_2) \pdone{\mu}P_{Y_2}(y_2)
\label{eq:der_1_I_2}
\end{IEEEeqnarray}
Using~(\ref{subeq:pu_general}) and~(\ref{subeq:py_general}) we obtain
\begin{subequations}
\label{subeq:der_1_I}
\begin{IEEEeqnarray}{rCl}
\IEEEeqnarraymulticol{3}{l}{
\sum_{u,y_2} P_{Y_2|U}(y_2|u) \pdone{\mu}P_U(u)} \nonumber\\* 
                         &=& \sum_{u,y_2} P_{Y_2|U}(y_2|u)\left[-P_{U|X}(u|x_0)
                            + \sum_xP_{U|X}(u|x)\tilde{P}_X(x) \right]\nonumber\\
                            & =&  1-1 = 0 \label{eq:der_1_I_3}
\end{IEEEeqnarray}
and
\begin{IEEEeqnarray}{rCl}
\IEEEeqnarraymulticol{3}{l}{
\sum_{u,y_2} P_{U|Y_2}(u|y_2) \pdone{\mu}P_{Y_2}(y_2) }\nonumber\\* 
                         &=& \sum_{u,y_2} P_{U|Y_2} (u|y_2) \bigg[ -P_{Y_2|X}(y_2|x_0)   \nonumber\\
                         && + \sum_x \tilde{P}_X(x) P_{Y_2|X}(y_2|x)\bigg] = 0.
                          \label{eq:der_1_I_4} 
\end{IEEEeqnarray}
\end{subequations}
Substitution of~(\ref{subeq:der_1_I}) in~(\ref{eq:der_1_I_2}) yields the desired result.
\QEDB

\subsection{Proofs of \eqref{subeq:d_I}--\eqref{subeq:D_derivatives_2}}
\label{subappendix:proof_der_I_D_2}
\subsubsection{Proof of~(\ref{subeq:d_I})}
\label{subappendix:proof_der_I}
We first present general derivative formulas for the mutual information functions.
\begin{IEEEeqnarray}{rCl}
\lefteqn{\pdone{\mu_j} I(U;Y_k) } \nonumber \\&=& \sum_{u,y_k}\left[\pdone{\mu_j}P_{U,Y_k}(u,y_k)\right]
                  \log\frac{P_{U,Y_k}(u,y_k)}{P_U(u)P_{Y_k}(y_k)}\nonumber\\
                  & & +\sum_{u,y_k} P_{U,Y_k}(u,y_k)\bigg[\frac{\pdone{\mu_j}P_{U,Y_k}(u,y_k)}{P_{U,Y_k}(u,y_k)}
                       \nonumber \\
                       && \hspace{2cm}   -\frac{\pdone{\mu_j}P_U(u)}{P_U(u)} -\frac{\pdone{\mu_j}P_{Y_k}(y_k)}{P_{Y_k}(y_k)}\bigg]
                        \\
                   &=& A_{1,k,j} +A_{2,k,j},\quad k=1,2,\quad j=1,2.\label{eq:appB_1}
\end{IEEEeqnarray}
with the obvious definitions for $A_{1,j}$ and $A_{2,j}$. Evaluating these terms:
\begin{IEEEeqnarray}{rCl}
\lefteqn{A_{2,k,j} }\nonumber \\&=& -\sum_{u,y_k} P_{U,Y_k}(u,y_k)\bigg[\frac{\pdone{\mu_j}P_{U,Y_k}(u,y_k)}{P_{U,Y_k}(u,y_k)}
                     \nonumber \\
                     && \hspace{3.4cm}   -\frac{\pdone{\mu_j}P_U(u)}{P_U(u)} -\frac{\pdone{\mu_j}P_{Y_k}(y_k)}{P_{Y_k}(y_k)}\bigg]
                        \\
         &=& -\sum_{u,y_k}\bigg[ P_{Y_k|U}(y_k|u)\pdone{\mu_j}P_U(u)  \nonumber \\
         && \hspace{1cm} + 
         P_{U|Y_k}(u|y_k)\pdone{\mu_j} P_{Y_k}(y_k)\bigg],\quad k=1,2,\quad j=1,2.\nonumber \\\label{eq:appB_2}
\end{IEEEeqnarray}
Evaluating the derivatives in the r.h.s. of~(\ref{eq:appB_2}), we have
\begin{IEEEeqnarray}{rCl}
\pdone{\mu_1} P_U(u) &=& -P_U^A(u) + P_U^B(u)\label{eq:appB_3}\\
\pdone{\mu_1} P_{Y_k}(y_k) &=& -\olsi{\mu}_2 P_{Y_k|X}(y_k|x_0) - \mu_2 \tilde{P}_{Y_k}^A(y_k) + P_{Y_k}^B(y_k) \nonumber \\
   \label{eq:appB_4}\\
   \pdone{\mu_2} P_U(u) &=& 0\label{eq:appB_3_2}\\
   \pdone{\mu_2} P_{Y_k}(y_k) &=& \olsi{\mu}_1\left[ \tilde{P}_{Y_k}^A(y_k) - P_{Y_k|X}(y_k|x_0)\right] \label{eq:appB_4_2}
\end{IEEEeqnarray}
resulting in 
\begin{equation}
A_{2,k,j}=0.  \label{eq:appB_4_3}
\end{equation}
For $A_{1,k,j}$, we first evaluate the derivative of the joint distribution:
\begin{IEEEeqnarray}{rCl}
\lefteqn{
\pdone{\mu_1}P_{U,Y_k}(u,y_k) } \qquad \nonumber \\
&=& \left\{ \,
                                       \begin{IEEEeqnarraybox}[] [c] {l?s}
                                       \IEEEstrut
                                       -\Big[\olsi{\mu}_2 P_{Y_k|X}(y_k|x_0)  \nonumber \\
                                       \hspace{1.2cm}+
                                   \mu_2\tilde{P}_{Y_k|U}^A(y_k|u)\Big] P_U^A(u) & for $u\in A$, \\
                                  P_{Y_k|U}^B(y_k|u) P_U^B(u) & for $u\in B.$
                                  \IEEEstrut
                                   \end{IEEEeqnarraybox}
                                   \right.
                                   \label{eq:appB_5}
\end{IEEEeqnarray}
\begin{IEEEeqnarray}{rCl}
\lefteqn{\pdone{\mu_2}P_{U,Y_k}(u,y_k) } \nonumber \\
&=& \left\{ \,
                                       \begin{IEEEeqnarraybox}[] [c] {l?s}
                                       \IEEEstrut
                                       \olsi{\mu}_1\left[- P_{Y_k|X}(y_k|x_0) +
                                   \tilde{P}_{Y_k|U}^A(y_k|u)\right] P_U^A(u) & for $u\in A$, \\
                                   0 & for $u\in B.$
                                  \IEEEstrut
                                   \end{IEEEeqnarraybox}
                                   \right.\nonumber \\
                                   \label{eq:appB_5_2}
\end{IEEEeqnarray}                                   
Therefore
\begin{IEEEeqnarray}{rCl}
\lefteqn{A_{1,k,1}\Bigr\rvert_{\substack{\mu_1 = 0\\ \mu_2=0}} } \qquad  \nonumber \\&=& \sum_{y_k,u\in A} \left[
   -\olsi{\mu}_2 P_{Y_k|X}(y_k|x_0) -
                                   \mu_2\tilde{P}_{Y_k|U}^A(y_k|u)\right]  \nonumber \\
                                   & & \hspace{1cm} \cdot P_U^A(u)  
                                   \log\frac{P_{U,Y_k}(u,y_k)}{P_U(u) P_{Y_k}(y_k)}\Bigr\rvert_{\substack{\mu_1 = 0\\ \mu_2=0}} 
                                   \nonumber\\
                                   & & +\sum_{y_k,u\in B} P_{Y_k|U}^B(y_k|u) P_U^B(u)
                                   \log\frac{ P_{Y_k|U}^B(y_k|u)}{P_{Y_k|X}(y_k|x_0)}\IEEEeqnarraynumspace\\
                                   &=& \sum_{u\in B} P_U^B(u) D\left(P_{Y_k|U}^B(\cdot|u)||P_{Y_k|X}(\cdot|x_0)\right)
                                   \label{eq:appB_6}
\end{IEEEeqnarray}
where in the last equality we used~(\ref{subeq:distribs_limit}). Subsituting~(\ref{eq:appB_6})
and~(\ref{eq:appB_4_3}) in~(\ref{eq:appB_1}) proves~(\ref{subeq:d_I_UY_mu1}).

For~(\ref{subeq:d_I_UY_mu2}), we only have to evaluate $A_{1,k,2}$. By~(\ref{eq:appB_2})   we have
\begin{IEEEeqnarray}{rCl}
\lefteqn{A_{1,k,2}\Bigr\rvert_{\substack{\mu_1 = 0\\ \mu_2=0}} } \nonumber \\
&=& \sum_{u,y_k}\left[\pdone{\mu_2}P_{U,Y_k}(u,y_k)\right]
                  \log\frac{P_{U,Y_k}(u,y_k)}{P_U(u)P_{Y_k}(y_k)}\Bigr\rvert_{\substack{\mu_1 = 0\\ \mu_2=0}} \IEEEeqnarraynumspace\\
                  &=& 0\label{eq:appB_10}
\end{IEEEeqnarray}
where we used~(\ref{subeq:distribs_limit}) and the fact that the r.h.s of~(\ref{eq:appB_5_2}) is bounded.
Subsituting~(\ref{eq:appB_10})
and~(\ref{eq:appB_4_3}) in~(\ref{eq:appB_1}) proves~(\ref{subeq:d_I_UY_mu2}).

The proof of~(\ref{subeq:d_I_XY_mu1}) and~(\ref{subeq:d_I_XY_mu2}) proceed along the same lines
as that of~(\ref{subeq:d_I_UY_mu1}). Parallel to~(\ref{eq:appB_1}), we write
\begin{IEEEeqnarray}{rCl}
\lefteqn{\pdone{\mu_j} I(X;Y_k) } \nonumber \quad \\
&=& \sum_{x,y_k}\left[\pdone{\mu_j}P_{X,Y_k}(x,y_k)\right]
                  \log\frac{P_{X,Y_k}(x,y_k)}{P_X(x)P_{Y_k}(y_k)}\nonumber\\
                  & & +\sum_{x,y_k} P_{X,Y_k}(x,y_k)\Bigg[\frac{\pdone{\mu_j}P_{X,Y_k}(x,y_k)}{P_{X,Y_k}(x,y_k)}
                   \nonumber \\
                   && \hspace{3.3cm}     -\frac{\pdone{\mu_j}P_X(x)}{P_X(x)} -\frac{\pdone{\mu_j}P_{Y_k}(y_k)}{P_{Y_k}(y_k)}\Bigg]
                        \nonumber\\
                   &=& B_{1,k,j} +B_{2,k,j},\hspace{1.3cm} k=1,2,\quad j=1,2.\label{eq:appB_11}
\end{IEEEeqnarray}
and
\begin{IEEEeqnarray}{rCl}
B_{2,k,j} &=& -\sum_{x,y_k}\bigg[ P_{Y_k|X}(y_k|x)\pdone{\mu_j}P_X(x)  \nonumber \\
&& \hspace{2cm}+ 
         P_{X|Y_k}(x|y_k)\pdone{\mu_j} P_{Y_k}(y_k)\bigg],\nonumber \\
         && \hspace{3.3cm} k=1,2,\quad j=1,2. \IEEEeqnarraynumspace\label{eq:appB_12} 
\end{IEEEeqnarray}
Evaluation of the derivatives in the r.h.s. of~(\ref{eq:appB_12}) gives
\begin{IEEEeqnarray}{rCl}
\pdone{\mu_1} P_X(x) &=& -\left[\olsi{\mu}_2\onei_{x_0}(x) +\mu_2 \tilde{P}_X^A(x)\right] + P_X^B(x)
                                               \label{eq:appB_13}\\
 \pdone{\mu_2} P_X(x) &=& \olsi{\mu}_1\left[-\onei_{x_0}(x) +\tilde{P}_X^A(x) \right].
                                                \label{eq:appB_14} 
\end{IEEEeqnarray}
Using~(\ref{eq:appB_13}) and~(\ref{eq:appB_4}) in~(\ref{eq:appB_12})
yields
\begin{IEEEeqnarray}{rCl}
B_{2,k,1} &=& \sum_{x,y_k} \left[ P_{Y_k|X}(y_k|x) \left( -\olsi{\mu}_2\onei_{x_0}(x) -\mu_2\tilde{P}_X^A(x) 
                                                     +P_X^B(x)\right) \right. \nonumber\\
                        & &\hspace{1cm}  +P_{X|Y_k}(x|y_k) \bigg(
                         -\olsi{\mu}_2 P_{Y_k|X}(y_k|x_0) \nonumber \\
&& \hspace{3cm}  \left. - \mu_2 \tilde{P}_{Y_k}^A(y_k) + P_{Y_k}^B(y_k)
                         \bigg)
                         \right]\nonumber\\
                         &=& 0.\label{eq:appB_15}
\end{IEEEeqnarray}
Similarly, using~(\ref{eq:appB_14}) and~(\ref{eq:appB_4_2}) in~(\ref{eq:appB_12}) yields
\begin{IEEEeqnarray}{rCl}
B_{2,k,2} &=& \sum_{x,y_k} \left[ P_{Y_k|X}(y_k|x) \left( -\olsi{\mu}_1\onei_{x_0}(x) +\olsi{\mu}_1\tilde{P}_X^A(x) 
                                                    \right) \right. \nonumber\\
                        & & + \left. P_{X|Y_k}(x|y_k) \left(
                         \olsi{\mu}_1 \tilde{P}_{Y_k}^A(y_k) - \olsi{\mu}_1 P_{Y_k|X}(y_k|x_0) 
                         \right)
                         \right]\nonumber\\
                         &=& 0.\label{eq:appB_16}
\end{IEEEeqnarray}
Next we evaluate $B_{1,k,j}$. By~(\ref{subeq:joint_distributions_x_y})
\begin{IEEEeqnarray}{rCl}
\pdone{\mu_1} P_{X,Y_k}(x,y_k) &=& \left[-\olsi{\mu}_2\onei_{x_0}(x) -\mu_2\tilde{P}_X^A(x) +P_X^B(x)\right] \nonumber \\
&& \cdot P_{Y_k|X}(y_k|x)
                                                             \label{eq:appB_17}\\
\pdone{\mu_2} P_{X,Y_k}(x,y_k) &=&  \olsi{\mu}_1\left[-\onei_{x_0}(x) +\tilde{P}_X^A(x)\right] P_{Y_k|X}(y_k|x), \nonumber\\
                                                             \label{eq:appB_18}                                                         
\end{IEEEeqnarray}
hence
\begin{IEEEeqnarray}{rCl}
B_{1,k,1} &=& \sum_{x,y_k}  \left[-\olsi{\mu}_2\onei_{x_0}(x) -\mu_2\tilde{P}_X^A(x) +P_X^B(x)\right] \nonumber \\
&& \cdot P_{Y_k|X}(y_k|x)
                          \log\frac{P_{Y_k|X}(y_k|x)}{P_{Y_k}(y_k)} \label{eq:appB_19}\\
\end{IEEEeqnarray}
Using~(\ref{subeq:distribs_limit_Yk}) we arrive at
\begin{equation}
B_{1,k,1}\Bigr\rvert_{\substack{\mu_1 = 0\\ \mu_2=0}} = \sum_x P_X^B(x) D\left( P_{Y_k|X}(\cdot|x)|| P_{Y_k|X}(\cdot|x_0)\right).
                                                 \label{eq:appB_20}
\end{equation}
Now~(\ref{subeq:d_I_XY_mu1}) follows from~(\ref{eq:appB_11}), (\ref{eq:appB_15}) and~(\ref{eq:appB_20}). 
Similarly, by~(\ref{eq:appB_18})
\begin{IEEEeqnarray}{rCl}
B_{1,k,2} &=& \olsi{\mu}_1\sum_{x,y_k} \left[-\onei_{x_0}(x) +\tilde{P}_X^A(x)\right] \nonumber \\
&& \hspace{1.2cm} \cdot  P_{Y_k|X}(y_k|x)
                       \log\frac{P_{Y_k|X}(y_k|x)}{P_{Y_k}(y_k)} \label{eq:appB_21}  
\end{IEEEeqnarray}
which, using again~(\ref{subeq:distribs_limit_Yk}), implies
\begin{equation}
B_{1,k,2}\Bigr\rvert_{\substack{\mu_1 = 0\\ \mu_2=0}} = \sum_x \tilde{P}_X^A(x) D\left( P_{Y_k|X}(\cdot|x)|| P_{Y_k|X}(\cdot|x_0)\right).
                                                 \label{eq:appB_22}
\end{equation}
The proof of~(\ref{subeq:d_I_XY_mu2}) follows from~(\ref{eq:appB_11}), (\ref{eq:appB_16}) and~(\ref{eq:appB_22}).

\subsubsection{Proof of~(\ref{eq:D_derivatives_1})}
\label{subappendix:proof_D_der_1}
\begin{equation}
\pdone{\mu_j} D(P_Z||Q_0) = \sum_z \left(\pdone{\mu_j}P_Z(z)\right)\log\frac{P_Z(z)}{Q_0(z)} \label{eq:appB_23}
\end{equation}
where, by~(\ref{subeq:joint_distributions_z})
\begin{IEEEeqnarray}{rCl}
\pdone{\mu_1} P_Z(z) &=& -\olsi{\mu}_2 Q_0(z) -\mu_2\tilde{P}_Z^A(z) + P_Z^B(z)\label{eq:appB_24}\\
\pdone{\mu_2} P_Z(z) &=&  -\olsi{\mu}_1 Q_0(z) +\olsi{\mu}_1 \tilde{P}_Z^A(z). \label{eq:appB_25}
\end{IEEEeqnarray}
Note that the derivatives~(\ref{eq:appB_24}), (\ref{eq:appB_25}) are bounded. Therefore~(\ref{eq:D_derivatives_1})
follows from~(\ref{eq:appB_23}) and~(\ref{subeq:distribs_limit_Z}).

\subsubsection{Proof of~(\ref{subeq:D_derivatives_2})}
\label{subappendix:proof_D_der_2}
\begin{IEEEeqnarray}{rCl}
\lefteqn{\pdtwo{\mu_j} D(P_Z||Q_0) } \nonumber \\ & =& \sum_z \left(\pdtwo{\mu_j} P_Z(z) \right) \log\frac{P_Z(z)}{Q_0(z)} 
                                                  +  \sum_z \left(\pdone{\mu_j} P_Z(z) \right)^2 \frac{1}{P_Z(z)}    \nonumber \\
                                                  \label{eq:appB_26}
\end{IEEEeqnarray}
Since the second derivatives of $P_Z(z)$ according to $\mu_j$ are bounded, (\ref{subeq:distribs_limit_Z})
implies that the first sum in the r.h.s. of~(\ref{eq:appB_26}) is 0. Using~(\ref{eq:appB_24}), (\ref{eq:appB_25}),
in the second sum of~(\ref{eq:appB_26}) we get
\begin{IEEEeqnarray}{rCl}
\pdtwo{\mu_1} D(P_Z||Q_0)\Bigr\rvert_{\substack{\mu_1 = 0\\ \mu_2=0}} &=& \sum_z\frac{(P_Z^B(z)- Q_0(z))^2}{Q_0(z)}
                                                     \\
                                                     & =&\chi_2(P_Z^B||Q_0)\label{eq:appB_27}\\
\pdtwo{\mu_2} D(P_Z||Q_0)\Bigr\rvert_{\substack{\mu_1 = 0\\ \mu_2=0}} &=& \sum_z\frac{(\tilde{P}_Z^A(z)- Q_0(z))^2}{Q_0(z)}
                                                    \\& =&\chi_2(\tilde{P}_Z^A||Q_0)\label{eq:appB_28}
\end{IEEEeqnarray}
proving~(\ref{subeq:D_derivatives_2_mu1}) and~(\ref{subeq:D_derivatives_2_mu2}). For~(\ref{subeq:D_derivatives_2_mu1_mu2})
\begin{IEEEeqnarray}{rCl}
\pdcross{\mu_1}{\mu_2} D(P_Z||Q_0) &=& \sum_z\left(\pdcross{\mu_1}{\mu_2}P_Z(z)\right)\log\frac{P_Z(z)}{Q_0(z)} \nonumber\\
                & & + \sum_z \left(\pdone{\mu_1} P_Z(z)\right)\frac{1}{P_Z(z)}\pdone{\mu_2} P_Z(z)\nonumber\\\\
                &=& C_1 +C_2
                 \label{eq:appB_29}
\end{IEEEeqnarray}
where $C_1$ (resp. $C_2$) is the first (resp. second) sum in~(\ref{eq:appB_29}). Then, using~(\ref{subeq:distribs_limit_Z})
and the derivatives~(\ref{eq:appB_24}), (\ref{eq:appB_25})
we obtain
\begin{IEEEeqnarray}{rCl}
C_1\Bigr\rvert_{\substack{\mu_1 = 0\\ \mu_2=0}} &=& 0,\label{eq:appB_30}\\
C_2\Bigr\rvert_{\substack{\mu_1 = 0\\ \mu_2=0}} &=& \sum_z\frac{(P_Z^B(z) - Q_0(z))
                                                                                      (\tilde{P}^A_Z(z) - Q_0(z))}{Q_0(z)}\IEEEeqnarraynumspace
                                                                                  \\
                                                                                  &    =&\chi_2(\tilde{P}_Z^A, P_Z^B||Q_0) \label{eq:appB_31}
\end{IEEEeqnarray}
The proof of~(\ref{subeq:D_derivatives_2_mu1_mu2}) follows from~(\ref{eq:appB_29}), (\ref{eq:appB_30})
and (\ref{eq:appB_31}).
\QEDB

\end{appendices}

\bibliographystyle{hieeetr}
\bibliography{./references_Yossi_2025_09_16.bib}

\end{document}